\newtheorem{Def}{Definition}[section]
\newtheorem{Thm}{Theorem}[section]
\newtheorem{Ex}{Example}[section]
\newtheorem{proof}{Proof}
\newtheorem{Lem}{Lemma}[section]
\newtheorem{Rem}{Remark}[section]
\newcommand{\A}{\mathcal{A}_{p}}
\newcommand{\ZZ}{\Z_{p^r}}
\newcommand{\F}{{\mathbb F}}
\newcommand{\Z}{{\mathbb Z}}
\newcommand{\wt}{\mathrm{wt}}
\newcommand{\beq}{\begin{equation}}
\newcommand{\eeq}{\end{equation}}
\newcommand{\bmat}{\left[ \begin{array}}
\newcommand{\emat}{\end{array} \right]}
\newcommand{\im}{\mbox{ Im }}
\newcommand{\ie}{\emph{i.e.}}
\newcommand{\hs}{ }
\newcommand{\codc}{\mbox{$\mathcal{C}$}}
\newcommand{\bg}{$b$-degree}
\newcommand{\lzp}{left-zero prime}
\title{\LARGE \bf
Noncatastrophic convolutional codes over a finite ring
}
\author[focal]{D. Napp  }
\ead{diegonapp@gmail.com}
\author[rvt]{R. Pinto }
\ead{raquel@ua.pt}
\author[rvt2]{C. Rocha \corref{cor1}}
\ead{sitagomes@gmail.com}
\address[focal]{Departament de Matemàtiques, Universitat d'Alacant, Spain. }
\address[rvt]{Department of Mathematics, University of Aveiro, Portugal.}
\address[rvt2]{Instituto Superior de Contabilidade e Administração de Coimbra, Instituto Politécnico de Coimbra, Portugal.}
\begin{document}

\begin{abstract}
Noncatastrophic encoders are an important class of polynomial generator matrices of convolutional codes. When these polynomials have coefficients in a finite field, these encoders have been characterized are being polynomial left prime matrices. In this paper we study the notion of noncatastrophicity in the context of convolutional codes when the polynomial matrices have entries in a finite ring. In particular, we need to introduce two different notion of primeness in order to fully characterize noncatastrophic encoders over the finite ring $\ZZ$. The second part of the paper is devoted to investigate the notion of free and column distance in this context when the convolutional code is a free finitely generated $\ZZ$-module. We introduce the notion of $b$-degree and provide new bounds on the free distances and column distance. We show that this class of convolutional codes is optimal with respect to the column distance and to the free distance if and only if its projection on $\mathbb Z_p$ is.\end{abstract}

\maketitle

\section{Introduction}

The notion of primeness plays a central role in the polynomial matrix approach to several areas of pure and applied mathematics, such as systems and control theory or coding theory. In this paper we consider polynomial matrices over the finite ring $\mathbb Z_{p^r}$, where $p$ is a prime and $r$ an integer greater than $1$.
Our motivation for considering such a finite ring $\mathbb Z_{p^r}$ stems from applications in the area of error-correcting codes and in particular of convolutional codes over $\mathbb Z_{p^r}$.  Such a ring is useful for the so-called coded modulation scheme where the codewords in $\mathbb Z_{p^r}$ are mapped onto phase-shift-keying (PSK) modulation signals sets.  The mapping is such that distances between modulation points are preserved under additive operations in $\mathbb Z_{p^r}$, see \cite{SridharaF05} for more details. In \cite{massey89} Massey and Mittelholzer observed for the first time that convolutional codes over $\mathbb Z_{M}$, are the most appropriate class of codes for phase modulation. Note that even though we will focus on the ring $\mathbb Z_{p^r}$, by the Chinese Remainder Theorem, results on codes over $\mathbb Z_{p^r}$ can be extended to codes over $\mathbb Z_{M}$.\\

An important concept in the theory of convolutional codes is the noncatastrophicity.  When a catastrophic convolutional generator matrix is used for encoding, finitely many errors in the estimate of the transmitted codeword can lead to infinitely many errors in the estimate of the information sequence. This is of course a catastrophic situation that has to be avoided when designing the generator matrix. Noncatastrophic generator matrices have been characterized as left prime polynomial matrices and have been studied in several contexts depending on the definition considered in each case, see \cite{GianiraJulia2020,ku09,el13,rosenthal98}. In this work we define convolutional codes as finitely generated free $\mathbb Z_{p^r}[d]$-modules of $\mathbb Z_{p^r}[d]^n$, where $\mathbb Z_{p^r}[d]$ is the polynomial ring with coefficients in $\mathbb Z_{p^r}$ and study noncatastrophicity in this setting \cite{ku07,NaPiTo17,NaPiTo16,so07}. In the case of matrices with entries in $\mathbb Z_{p^r}[d]$ we need to distinguish two types of left primeness, namely, zero left prime and factor left prime, as happens in the case of polynomial matrices in several variables over a field \cite{LIN2001,NaRoSCL,Sha14,ZerPM}. We provide a characterization of zero left prime polynomial matrices from which it follows that when a convolutional code admit a left zero prime generator matrix, i.e., a noncatastrophic encoder, then the code can be described by means of a parity-check polynomial matrix. \\

The second part of the paper is devoted to investigating the Hamming distances of these codes as these will determine their error-correcting capabilities. In the context of convolutional codes the column distance is arguably the most important notion of distance \cite{jo99} and therefore we shall focus on the study of this particular distance. To this end we introduce a novel notion, called the $b$-degree, and derive bounds on the column distance in terms of the dimension, length and $b$-degree. In \cite{el13} it was proven that the free distance of convolutional codes over $\mathbb Z_{p^r}$ it is determined by its projection over $\mathbb Z_p$. The authors used this fact and
the Hensel lift of a cyclic code in \cite{el13} to build optimal convolutional codes with respect to the free distance. Here we show that  a convolutional code over $\mathbb Z_p$ is optimal with respect to the column distance if and only if its projection is. This will allow the construction of optimal convolutional codes over $\mathbb Z_{p^r}$ from well-known classes of convolutional codes over $\mathbb Z_{p}$.

The results of the paper are twofold: we first analyse the primeness of polynomial matrices with entries in $\mathbb Z_{p^r}[d]$ in Section \ref{sec:pol} and second we investigate the column distances of free convolutional codes over $\mathbb Z_{p^r}$ in Section \ref{sec:dist}. In each section we briefly provide some preliminaries:  in Section \ref{sec:pol} we recall known results of primeness of polynomial matrices over finite fields and in Section \ref{sec:dist} the definitions of convolutional codes, free distance and column distances are presented.

\section{Primeness of polynomial matrices over $\ZZ$}\label{sec:pol}

We denote by $\mathbb F[d]$ the ring of polynomials in the indeterminate $d$ and coefficients in a finite field $\mathbb F$ and by $\mathbb F(d)$ the field of rational functions defined in $\mathbb F$. Next we will present results that are well-known in the literature, see \cite{FoPi04,fo70} for more details.

\subsection{Primeness of polynomial matrices over a finite field $\F$}

\begin{Def} A polynomial matrix $U(d) \in \mathbb F[d]^{k \times k}$ is unimodular if it is invertible and $U(d)^{-1} \in \mathbb F[d]^{k \times k}$.
\end{Def}

\begin{Lem}
Let $U(d) \in \mathbb F[d]^{k \times k}$. Then $U(d)$ is unimodular if and only if det$\; U(d) \in \mathbb F \backslash\{0\}$.
\end{Lem}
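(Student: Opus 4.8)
The plan is to prove both implications of the characterization ``$U(d)$ unimodular $\iff \det U(d) \in \mathbb F \setminus \{0\}$'' by exploiting the fact that $\mathbb F[d]$ is a commutative integral domain in which the units are precisely the nonzero constants.

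For the forward direction, suppose $U(d)$ is unimodular, so there exists $V(d) \in \mathbb F[d]^{k \times k}$ with $U(d) V(d) = I_k$. Applying the multiplicativity of the determinant over the commutative ring $\mathbb F[d]$ gives $\det U(d) \cdot \det V(d) = \det I_k = 1$. Hence $\det U(d)$ is a unit in $\mathbb F[d]$. Since $\mathbb F$ is a field, $\mathbb F[d]$ has units exactly $\mathbb F \setminus \{0\}$ (a polynomial of positive degree cannot be invertible, as degrees add under multiplication in an integral domain, and the zero polynomial is not invertible), so $\det U(d) \in \mathbb F \setminus \{0\}$.

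For the converse, suppose $\det U(d) \in \mathbb F \setminus \{0\}$. Use the adjugate (classical adjoint) matrix $\mathrm{adj}(U(d)) \in \mathbb F[d]^{k \times k}$, whose entries are polynomials (signed $(k-1) \times (k-1)$ minors of $U(d)$), and which satisfies $U(d) \, \mathrm{adj}(U(d)) = \mathrm{adj}(U(d)) \, U(d) = \det U(d) \cdot I_k$. Since $\det U(d)$ is a nonzero element of the field $\mathbb F$, it is invertible with inverse $(\det U(d))^{-1} \in \mathbb F \subseteq \mathbb F[d]$. Therefore $U(d)^{-1} = (\det U(d))^{-1} \mathrm{adj}(U(d))$ has entries in $\mathbb F[d]$, so $U(d)$ is invertible with polynomial inverse, i.e.\ unimodular by definition.

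There is essentially no hard obstacle here; the statement is a direct transcription of the standard criterion for invertibility of matrices over a commutative ring specialized to $\mathbb F[d]$. The only point requiring minor care is the identification of the unit group of $\mathbb F[d]$ as $\mathbb F \setminus \{0\}$, which rests on $\mathbb F[d]$ being an integral domain (so that $\deg(fg) = \deg f + \deg g$ for nonzero $f,g$), and the fact that the adjugate formula is valid over any commutative ring, not just fields.
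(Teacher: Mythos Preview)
Your proof is correct and entirely standard. The paper does not actually supply a proof of this lemma; it is stated as a well-known preliminary result with a reference to the literature, so your argument via multiplicativity of the determinant and the adjugate formula is exactly the expected justification.
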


\begin{Def}
A polynomial matrix $A(d) \in \mathbb F[d]^{k \times n}$ is left prime if in all factorizations
$$
A(d)=\Delta(d) \bar A(d), \;\mbox{with} \; \Delta(d) \in \mathbb F[d]^{k \times k}, \;\mbox{and} \; \bar A(d) \in \mathbb F[d]^{k \times n},
$$
the left factor $\Delta(d)$ is unimodular.
\end{Def}

Left prime matrices admit several characterizations as stated in the next theorem.

\begin{Thm} \label{lp}
Let $A(d) \in \mathbb F[d]^{k \times n}$. The following are equivalent:
\begin{enumerate}
\item $A(d)$ is left prime;
\item there exist unimodular matrices $U(d) \in \mathbb F[d]^{k \times k}$ and $V(d) \in \mathbb F[d]^{n \times n}$ such that
$$
U(d)A(d)V(d)=[I_k \;\; 0];
$$
\item there exists a unimodular matrix $V(d) \in \mathbb F[d]^{n \times n}$ such that $A(d)V(d)=[I_k \;\; 0]$;
\item there exists $B(d) \in \mathbb F[d]^{(n-k) \times n}$ such that $\left[\begin{matrix} A(d) \\ B(d) \end{matrix}\right]$ is unimodular;
\item $A(d)$ admits a polynomial right inverse;
\item for all $u(d) \in \mathbb F(d)^k$, $u(d)A(d) \in \mathbb F[d]^n$ implies that $u(d) \in \mathbb F[d]^k$;
\item $A(\alpha)$ has rank $k$ for all $\alpha \in \bar{\mathbb F}$, where $\bar{\mathbb F}$ denotes the algebraic closure of $\mathbb F$;
\item the ideal generated by all the $k$-th order minors of $A(d)$ is $\mathbb F[d]$.
\end{enumerate}
\end{Thm}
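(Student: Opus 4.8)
The plan is to prove the seven equivalences by establishing a cycle of implications, roughly following the order $1 \Rightarrow 6 \Rightarrow 7 \Rightarrow 8 \Rightarrow 2 \Rightarrow 3 \Rightarrow 4 \Rightarrow 5 \Rightarrow 1$, though I would rearrange some arrows to minimize work. The backbone is the Smith form over the principal ideal domain $\mathbb F[d]$: any $A(d) \in \mathbb F[d]^{k \times n}$ of full row rank $k$ over $\mathbb F(d)$ can be written $U(d)A(d)V(d) = [\,\mathrm{diag}(\gamma_1, \ldots, \gamma_k) \mid 0\,]$ with $U, V$ unimodular and $\gamma_i \mid \gamma_{i+1}$. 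First I would dispose of the equivalence $1 \Leftrightarrow 2$: if $A$ is left prime, then in the Smith factorization the left factor $U(d)^{-1}\,\mathrm{diag}(\gamma_i)$ (suitably interpreted) would have to be unimodular, forcing all $\gamma_i \in \mathbb F \setminus \{0\}$, hence $2$; conversely $2$ trivially gives that in any factorization $A = \Delta \bar A$ the matrix $\Delta$ divides $I_k$ on the right after multiplying by $U$, so $\Delta$ is unimodular. The implications $2 \Rightarrow 3$ (absorb $U(d)$) and $3 \Rightarrow 4$ (append the last $n-k$ rows of $V(d)^{-1}$) and $3 \Rightarrow 5$ (the first $k$ columns of $V(d)$ give a polynomial right inverse) are essentially bookkeeping, and $4 \Rightarrow 5$, $5 \Rightarrow 1$ are immediate (a right inverse $A(d)R(d) = I_k$ forces any left factor $\Delta$ to be right-invertible over $\mathbb F[d]$, and a square right-invertible polynomial matrix is unimodular).

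The more substantive equivalences are those involving the evaluation condition $7$ and the minor ideal condition $8$. For $8 \Leftrightarrow 7$ I would use that the ideal generated by the $k \times k$ minors is $\mathbb F[d]$ iff these minors have no common root in $\bar{\mathbb F}$ (Nullstellensatz over $\mathbb F[d]$, or more elementarily: $\mathbb F[d]$ is a PID, so the ideal is all of $\mathbb F[d]$ iff the gcd of the minors is a unit iff they share no root), and that $A(\alpha)$ has rank $< k$ iff every $k \times k$ minor vanishes at $\alpha$. For $2 \Rightarrow 8$: the $k\times k$ minors are multiplied only by units of $\mathbb F[d]$ (namely $\det U$, $\det V$ and their inverses) under the transformation, and the minors of $[I_k \mid 0]$ generate the unit ideal. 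For $8 \Rightarrow 2$ I would invoke the Smith form: the product $\gamma_1 \cdots \gamma_k$ is, up to a unit, the gcd of the $k\times k$ minors, which is a unit by $8$, so each $\gamma_i$ is a unit, giving $2$. This closes the cycle.

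The main obstacle — really the only place where one must be slightly careful rather than purely mechanical — is marshalling the Smith normal form correctly and checking that the determinantal divisors (the gcd of the $i\times i$ minors) are genuinely invariant up to units under left/right multiplication by unimodular matrices; this is the linchpin connecting the "structural" conditions $1$–$5$ with the "computational" conditions $7$–$8$. A secondary subtlety is implication $1 \Rightarrow$ (full row rank): one must first argue that a left prime matrix has rank $k$ over $\mathbb F(d)$, for otherwise $A = \Delta \bar A$ with $\Delta$ a non-unimodular diagonal-type factor extracting a common polynomial factor would contradict primeness — but one must rule out the degenerate case where $A$ has a zero row pattern, which forces $k \le n$ and full rank to even make the statement sensible. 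Everything else reduces to the cited Lemma (unimodular $\Leftrightarrow$ constant nonzero determinant) and standard manipulations with unimodular matrices over the PID $\mathbb F[d]$.
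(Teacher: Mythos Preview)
Your proof plan is correct and follows the standard Smith-normal-form route over the PID $\mathbb F[d]$. However, there is nothing to compare against: the paper does not prove this theorem. It appears in the preliminaries subsection on polynomial matrices over a field, is introduced with the phrase ``results that are well-known in the literature,'' and is stated with citations but without proof. The paper's own contributions begin in the next subsection, where the analogue over $\mathbb Z_{p^r}$ (Theorem~\ref{lzp}) is proved by reducing modulo $p$ and invoking precisely this field-case result as a black box.

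One small gap in your write-up: your detailed plan explicitly treats conditions 1--5, 7, and 8, but condition 6 is mentioned only in the opening cycle and then dropped. It is easy to splice in --- for instance $5 \Rightarrow 6$ (if $R$ is a polynomial right inverse and $u(d)A(d)=w(d)$ is polynomial, then $u(d)=w(d)R(d)$ is polynomial) and $6 \Rightarrow 1$ (if $A=\Delta\bar A$ with $\Delta$ nonunimodular, pick a nonpolynomial row $\ell(d)$ of $\Delta(d)^{-1}$; then $\ell(d)A(d)$ equals the corresponding row of $\bar A(d)$, which is polynomial, contradicting 6). With that addition your cycle closes.
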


\subsection{Primeness of polynomial matrices over $\ZZ$}

In this section we study the notion of left prime for polynomial matrices over $\ZZ$. We denote by $\ZZ[d]$ the ring of polynomials in the indeterminate $d$, with coefficients in $\ZZ$ and by $\ZZ(d)$ the ring of rational functions defined, see \cite{jo98}, as the set
$$
\small
\left\{\frac{p(d)}{q(d)}: p(d),q(d) \in \ZZ[d] \mbox{ and the coefficient of the smallest power of $d$ in $q(d)$ is a unit}\right\}.
$$
This condition allows us to treat a rational function as an equivalence class in the relation
$$
\frac{p(D)}{q(D)} \sim \frac{p_1(D)}{q_1(D)} \mbox{ if and only if } p(D)q_1(D) = p_1(D) q(D).
$$


Any element $a \in \ZZ$ has a $p$-adic expansion \cite{ca00a}, \ie, it can be written uniquely as a linear combination of
$1,p,p^2,\dots$ $ \dots, p^{r-1}$, with coefficients in
$\A=\{0,1, \dots,p-1\} \subset \ZZ$,
$$
a=\alpha_0+\alpha_1p+\dots+\alpha_{r-1}p^{r-1}, \; \; \alpha_i\in\A, \; \;i=0,1,\dots,r-1.
$$
Note that all elements in $\A \backslash\{0\}$ are units. Given a matrix $A(d) \in \mathbb Z_{p^r}[d]^{k \times n}$, denote by $[A(d)]_p$ its (componentwise) projection into $\mathbb Z_p$.

\begin{Def}
A polynomial matrix $U(d) \in \ZZ[d]^{k \times k}$ is unimodular if it is invertible and \mbox{$U(d)^{-1} \in \ZZ[d]^{k \times k}$.}
\end{Def}

Next lemma characterizes the polynomial matrices over $\ZZ[d]$ which admit a right polynomial inverse.

\begin{Lem} A polynomial matrix $A(d) \in \ZZ[d]^{k \times n}$, with $n \geq k$, admits a polynomial right inverse if and only if $[A(d)]_p \in \Z_p[d]^{k \times n}$
also admits a polynomial right inverse over $\Z_p[d]$.
\end{Lem}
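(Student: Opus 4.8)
The plan is to prove both implications, with the nontrivial direction being that a polynomial right inverse of $[A(d)]_p$ over $\Z_p[d]$ can be lifted to a polynomial right inverse of $A(d)$ over $\ZZ[d]$. The forward direction is immediate: if $A(d)R(d)=I_k$ with $R(d)\in\ZZ[d]^{n\times k}$, then projecting componentwise modulo $p$ gives $[A(d)]_p[R(d)]_p=I_k$ in $\Z_p[d]^{k\times k}$, so $[A(d)]_p$ has the polynomial right inverse $[R(d)]_p$.

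For the converse, I would argue by induction on $r$, or equivalently lift one $p$-adic layer at a time. Suppose $\bar R_1(d)\in\Z_p[d]^{n\times k}$ satisfies $[A(d)]_p\bar R_1(d)=I_k$. Using the $p$-adic set $\A=\{0,1,\dots,p-1\}$ as a set of coset representatives, pick any $R_1(d)\in\ZZ[d]^{n\times k}$ with entries having coefficients in $\A$ that reduces to $\bar R_1(d)$ modulo $p$. Then $A(d)R_1(d)=I_k+pE_1(d)$ for some $E_1(d)\in\ZZ[d]^{k\times k}$ (since all coefficients of $A(d)R_1(d)-I_k$ lie in $p\ZZ$). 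The idea is now a Newton/geometric-series correction: because $p$ is nilpotent in $\ZZ$ with $p^r=0$, the matrix $I_k+pE_1(d)$ is invertible over $\ZZ[d]$ with inverse $\sum_{j=0}^{r-1}(-pE_1(d))^j$, which is again polynomial. Hence $R(d):=R_1(d)\bigl(\sum_{j=0}^{r-1}(-pE_1(d))^j\bigr)\in\ZZ[d]^{n\times k}$ satisfies $A(d)R(d)=(I_k+pE_1(d))(I_k+pE_1(d))^{-1}=I_k$, giving the desired polynomial right inverse.

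Concretely I would present it as: from $[A(d)]_p$ right invertible, build a single candidate lift $R_1(d)$; compute $A(d)R_1(d)=I_k+N(d)$ where every coefficient of $N(d)$ is divisible by $p$; observe $N(d)^r$ has all coefficients divisible by $p^r$, hence $N(d)^r=0$ in $\ZZ[d]$; therefore $I_k+N(d)$ is a unit in $\ZZ[d]^{k\times k}$ with polynomial inverse $I_k-N(d)+N(d)^2-\cdots+(-1)^{r-1}N(d)^{r-1}$; and conclude $R(d)=R_1(d)(I_k-N(d)+\cdots+(-1)^{r-1}N(d)^{r-1})$ works.

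The main obstacle—really the only subtlety—is making sure the ``geometric series'' argument is applied correctly: $I_k+N(d)$ is not invertible merely because $N(d)$ is ``small'', but because $N(d)$ lies in the ideal $p\ZZ[d]^{k\times k}$ which is nilpotent (as $p^r=0$ in $\ZZ$), so the Neumann series terminates after $r$ terms and stays within $\ZZ[d]$. One should also note that the choice of lift $R_1(d)$ is immaterial, and that no restriction like $n\ge k$ beyond what is already assumed is needed for this part, since we only invert the $k\times k$ matrix $A(d)R_1(d)$ on the left of $R_1(d)$. I expect the whole proof to be short once this nilpotency observation is isolated.
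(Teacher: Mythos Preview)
Your proof is correct and follows essentially the same approach as the paper: reduce the forward direction to projection modulo $p$, and for the converse lift a right inverse $B(d)$ of $[A(d)]_p$ to $\ZZ[d]$, observe $A(d)B(d)=I_k+pN(d)$ with $N(d)\in\ZZ[d]^{k\times k}$, and invert $I_k+pN(d)$ via the finite geometric series using $p^r=0$. The only cosmetic difference is that the paper writes $A(d)B(d)=I_k-pX(d)$ and uses $\sum_{j=0}^{r-1}p^jX(d)^j$, which is the same idea up to a sign.
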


\begin{proof}
If $[A(d)]_p$ admits a polynomial right inverse over $\Z_p[d]$ then there exists $B(d) \in \Z_p[d]^{n \times k}$ such that
$$
[A(d)]_p B(d)=I_k
  \mod \; p.$$
Considering $B(d)$ as a matrix over $\ZZ[d]$, we have that
$$
A(d) B(d) = I_k-p X(d),
$$
for some $X(d) \in \ZZ[d]^{k \times k}$ and therefore
$$
B(d)(I_k + pX(d) + p^2 X^2(d) + \cdots + p^{r-1}X^{r-1}(d))
$$
is a right inverse of $A(d)$. The converse is obvious.
\end{proof}

The following theorem is immediate.

\begin{Thm}\label{unimodular}
Let $U(d) \in \ZZ[d]^{k \times k}$. The following are equivalent:
\begin{enumerate}
\item $U(d)$ is unimodular;
\item $det \, U(d)$ is a unit;
\item $[U(d)]_p$ is unimodular in $\Z_p[d]^{k \times k}$.
\end{enumerate}
\end{Thm}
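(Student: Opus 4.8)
The plan is to prove the three equivalences in a cycle $(1)\Rightarrow(2)\Rightarrow(3)\Rightarrow(1)$, exploiting that $\ZZ$ is a local ring with maximal ideal $(p)$ and residue field $\Z_p$, so that the projection $[\,\cdot\,]_p$ is a ring homomorphism $\ZZ[d]\to\Z_p[d]$.

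First, for $(1)\Rightarrow(2)$: if $U(d)$ is unimodular, then $U(d)U(d)^{-1}=I_k$ with both factors polynomial, so taking determinants gives $\det U(d)\cdot\det U(d)^{-1}=1$ in $\ZZ[d]$. Hence $\det U(d)$ is a unit of $\ZZ[d]$. It remains to observe that the units of $\ZZ[d]$ are exactly the constant polynomials equal to a unit of $\ZZ$; this follows because any non-unit constant of $\ZZ$ projects to $0$ in $\Z_p$ (as $\Z_p$ is a field and the non-units of $\ZZ$ are precisely the multiples of $p$), and a polynomial of positive degree cannot be a unit since its image in $\Z_p[d]$ would have to be a unit of the integral domain $\Z_p[d]$, forcing degree $0$ after projection, and then a standard lifting/nilpotent argument (every element of $\ZZ$ is either a unit or has $p$-adic valuation $\ge 1$, and $p$ is nilpotent) rules out positive degree. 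So $\det U(d)$ being a unit of $\ZZ[d]$ means it is a unit of $\ZZ$, which is what $(2)$ asserts.

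Next, $(2)\Rightarrow(3)$: applying the homomorphism $[\,\cdot\,]_p$ to $\det U(d)$ and using that it commutes with taking determinants, $\det[U(d)]_p=[\det U(d)]_p$, which is the image of a unit of $\ZZ$ and hence a nonzero element of the field $\Z_p$, i.e.\ a unit of $\Z_p$. By Lemma on unimodularity over a field (the second lemma of Section 2.1), $[U(d)]_p$ is unimodular in $\Z_p[d]^{k\times k}$. For $(3)\Rightarrow(1)$: if $[U(d)]_p$ is unimodular over $\Z_p[d]$, then in particular $[U(d)]_p$ admits a polynomial right inverse over $\Z_p[d]$, so by the preceding Lemma (the one characterizing polynomial matrices over $\ZZ[d]$ admitting a right polynomial inverse), $U(d)$ admits a polynomial right inverse $B(d)\in\ZZ[d]^{k\times k}$, with $U(d)B(d)=I_k$. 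For square matrices a one-sided polynomial inverse is automatically two-sided: from $U(d)B(d)=I_k$ we get $\det U(d)\cdot\det B(d)=1$, so $\det U(d)$ is a unit of $\ZZ[d]$, hence (by the remark above) a unit of $\ZZ$, and then the adjugate formula $U(d)^{-1}=(\det U(d))^{-1}\,\mathrm{adj}(U(d))$ exhibits a polynomial inverse, so $U(d)$ is unimodular.

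The main obstacle, and the only place requiring genuine care, is the identification of the units of $\ZZ[d]$ with the units of $\ZZ$ --- i.e.\ ruling out positive-degree units, which fails for general coefficient rings but holds here because $p$ (hence every non-unit of $\ZZ$) is nilpotent. One clean way is: reduce mod $p$ to see the leading coefficient of any unit must be a non-unit of $\ZZ$, then induct on $r$ (or on the nilpotency degree of $p$) to conclude the whole polynomial is constant. Alternatively, the whole theorem can be packaged as ``immediate'' by first recording this fact about units as a one-line remark, which is presumably the intent of the phrase ``the following theorem is immediate'' preceding the statement.
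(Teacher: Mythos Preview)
Your proof contains a genuine mathematical error: the claim that the units of $\ZZ[d]$ are exactly the constant polynomials that are units of $\ZZ$ is \emph{false}. For instance, in $\Z_4[d]$ one has $(1+2d)^2 = 1+4d+4d^2 = 1$, so $1+2d$ is a non-constant unit. In general, a polynomial $a_0+a_1d+\cdots+a_md^m\in R[d]$ over a commutative ring $R$ is a unit if and only if $a_0$ is a unit and $a_1,\dots,a_m$ are nilpotent; since $p$ is nilpotent in $\ZZ$, this produces many non-constant units. Your final paragraph has the logic reversed: nilpotents in the coefficient ring are precisely what \emph{create} positive-degree units, not what rule them out, so the ``standard lifting/nilpotent argument'' you invoke does not exist.

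This error stems from reading condition~(2) as ``$\det U(d)$ is a unit of $\ZZ$''. Under that reading the theorem is false (take $U(d)=[1+2d]$ over $\Z_4$: it is unimodular but its determinant is not a constant). The intended reading is ``$\det U(d)$ is a unit of $\ZZ[d]$''. With that interpretation everything becomes genuinely immediate, as the paper says: $(1)\Leftrightarrow(2)$ is the adjugate formula together with multiplicativity of determinants, and $(2)\Leftrightarrow(3)$ follows since a polynomial $f\in\ZZ[d]$ is a unit iff $[f]_p$ is a unit of $\Z_p[d]$ (equivalently, a nonzero constant of $\Z_p$), combined with $\det[U(d)]_p=[\det U(d)]_p$ and the field-case lemma. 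Note that your arguments for $(1)\Rightarrow(2)$ and $(3)\Rightarrow(1)$ already establish that $\det U(d)$ is a unit of $\ZZ[d]$ and then use only that fact in the adjugate step; once you drop the spurious attempt to force $\det U(d)$ into $\ZZ$, your cycle works.
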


Left primeness is a property of polynomial matrices which plays a fundamental role when we consider convolutional codes over a finite field $\mathbb F$. As mentioned before, a polynomial matrix \mbox{$A(d) \in \mathbb F[d]^{k \times n}$} is left prime if in all factorizations
$$
A(d)=\Delta(d) \tilde A(d), \;\mbox{with}\; \Delta(d) \in \mathbb F[d]^{k \times k}, \;\mbox{and}\; \tilde A(d) \in \mathbb F[d]^{k \times n},
$$
the left factor $\Delta(d)$ is unimodular, or equivalently if the ideal generated by all the $k$-th order minors of $A(d)$ is $\mathbb F[d]$ (see Theorem \ref{lp}). However, this equivalence does not hold over $\ZZ$. There are polynomial matrices over $\ZZ[d]$ that satisfy the former condition but do not satisfy the later, as it is illustrated in the following example.

\begin{Ex}\label{ex1}
The matrix
$$
A(d) = \left[\begin{matrix} 1 + 3d & 1 + d \end{matrix} \right] \in \mathbb Z_4[d]^2
$$
does not have a nonunimodular left factor, but the ideal generated by its full size minors is
$$\{(1+d)p(d) \, : \; p(d) \in \ZZ[d]\}.$$
\end{Ex}

Therefore, we need to introduce two different notions of primeness when dealing with polynomial matrices over $\ZZ$.

\begin{Def}
A polynomial matrix $A(d) \in \ZZ[d]^{k \times n}$ is left factor-prime ($\ell FP$) if in all factorizations
$$
A(d)=\Delta(d) \bar A(d)\mbox{ with } \Delta(d) \in \ZZ[d]^{k \times k}\mbox{ and }\bar A(d) \in \ZZ[d]^{k \times n},
$$
the left factor $\Delta(d)$ is unimodular.
\end{Def}

\begin{Def}
A polynomial matrix $A(d) \in \ZZ[d]^{k \times n}$ is left zero-prime ($\ell ZP$) if the ideal generated by all the $k$-th order minors of $A(d)$ is $\ZZ[d]$.
\end{Def}

Right factor-prime (rFP) and right zero-prime (rZP) matrices are defined in the same way, upon taking transposes.

\begin{Rem}
  Note that the fact that the conditions of Theorem \ref{lp} are not anymore equivalent when considering rings instead of fields also occurs when considering the polynomial ring $\F[d_1,\dots , d_n]$ in several variables instead of $\mathbb F[d]$, see \cite{LIN2001,chinos2020,NaRoSCL,Sha14,ZerPM} for more details.
\end{Rem}

As shown in Example \ref{ex1} factor-primeness does not imply zero-primeness, however the converse is true as stated in the following lemma.

\begin{Lem}
Let $A(d) \in \ZZ[d]^{k \times n}$. If $A(d)$ is left zero-prime then it is also left factor-prime.
\end{Lem}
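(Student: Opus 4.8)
The plan is to prove the contrapositive: if $A(d)$ is \emph{not} left factor-prime, then it is not left zero-prime. So suppose $A(d) = \Delta(d)\bar A(d)$ with $\Delta(d) \in \ZZ[d]^{k \times k}$ a nonunimodular left factor. By the Cauchy--Binet formula, every $k$-th order minor of $A(d)$ is $\det\Delta(d)$ times the corresponding $k$-th order minor of $\bar A(d)$; hence the ideal generated by all $k$-th order minors of $A(d)$ is contained in the ideal $(\det\Delta(d))$ of $\ZZ[d]$. It therefore suffices to show that $(\det\Delta(d)) \neq \ZZ[d]$, i.e. that $\det\Delta(d)$ is not a unit in $\ZZ[d]$ — but this is exactly Theorem \ref{unimodular} applied to the non-unimodular matrix $\Delta(d)$: if $\det\Delta(d)$ were a unit then $\Delta(d)$ would be unimodular, contradicting the choice of the factorization. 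Thus $1$ is not in the ideal generated by the $k$-th order minors, so $A(d)$ is not left zero-prime.

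The only subtlety to address is the formal one: one must be careful that ``the ideal generated by all the $k$-th order minors'' is taken inside $\ZZ[d]$ (as in the Definition of $\ell ZP$), and that Cauchy--Binet indeed gives $\mathrm{minor}_S(A) = \det\Delta(d)\cdot\mathrm{minor}_S(\bar A)$ for each size-$k$ column subset $S$ — this is the standard multilinearity identity and holds verbatim over any commutative ring, in particular over $\ZZ[d]$. So the containment of ideals is immediate and no computation is needed.

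I expect no real obstacle here; the lemma is essentially a one-line consequence of Cauchy--Binet together with the characterization of unimodularity in Theorem \ref{unimodular}. If one prefers a direct (non-contrapositive) argument, one can instead say: assume $A(d)$ is $\ell ZP$ and take any factorization $A(d) = \Delta(d)\bar A(d)$; the $\ell ZP$ hypothesis forces $1 \in (\det\Delta(d))$, so $\det\Delta(d)$ is a unit, and Theorem \ref{unimodular} then gives that $\Delta(d)$ is unimodular, which is precisely what $\ell FP$ requires. Either phrasing is complete; I would present the second one for brevity.
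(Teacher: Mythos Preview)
Your proof is correct and follows essentially the same route as the paper: the contrapositive, the containment of the minor ideal in $(\det\Delta(d))$ via Cauchy--Binet, and the appeal to Theorem~\ref{unimodular} to conclude that $\det\Delta(d)$ is a nonunit. The paper's version is slightly terser (it leaves the Cauchy--Binet step implicit), but there is no substantive difference.
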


\begin{proof}
Let us assume that $A(d)$ is not left factor prime. Then $A(d)=X(d) \tilde A(d) $ for some \mbox{$\tilde A(d) \in \ZZ[d]^{k \times n}$} and $X(d) \in \ZZ[d]^{k \times k}$ nonunimodular. Then by Theorem \ref{unimodular} $x(d) = det \, X(d)$ is not a unit and the ideal generated by all the $k$-th order minors of $A(d)$ is contained in \mbox{$\{x(d)p(d) \, : \, p(d) \in \ZZ[d]\}$}. Consequently, $A(d)$ is not left zero-prime.
\end{proof}

It is easy to see that an ideal $\mathcal{I}$ of $\ZZ[d]$ is equal to $\ZZ[d]$ if and only if
$[\mathcal{I}]_p=\{[u]_p:\;u\in\mathcal{I}\}$ is equal to $\Z_p[d]$ and, therefore, the next lemma follows immediately.

\begin{Lem}\label{lzplp}
$A(d) \in \ZZ[d]^{k \times n}$ is left zero-prime over $\ZZ[d]$ if and only if $[A(d)]_p \in \mathbb Z_p[d]^{k \times n}$ is left prime over $\mathbb Z_p[d]$.
\end{Lem}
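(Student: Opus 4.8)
The plan is to reduce the statement to the minor-ideal characterization of left primeness over a field (item~8 of Theorem~\ref{lp}) combined with the elementary observation on ideals of $\ZZ[d]$ recalled just before the statement.

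First I would note that componentwise reduction $[\cdot]_p\colon\ZZ[d]\to\Z_p[d]$ is a surjective ring homomorphism and that it commutes with determinants (being a ring homomorphism applied entrywise). Hence, for any fixed choice of $k$ columns, the corresponding $k$-th order minor of $[A(d)]_p$ is the reduction mod $p$ of the corresponding minor of $A(d)$. It follows that if $\mathcal I\subseteq\ZZ[d]$ denotes the ideal generated by all $k$-th order minors of $A(d)$, then $[\mathcal I]_p$ is exactly the ideal of $\Z_p[d]$ generated by all $k$-th order minors of $[A(d)]_p$.

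Then I would simply chain equivalences: $A(d)$ is $\ell ZP$ $\iff$ $\mathcal I=\ZZ[d]$ (by definition) $\iff$ $[\mathcal I]_p=\Z_p[d]$ (by the ideal observation) $\iff$ the ideal generated by all $k$-th order minors of $[A(d)]_p$ equals $\Z_p[d]$ (by the previous paragraph) $\iff$ $[A(d)]_p$ is left prime over $\Z_p[d]$ (by item~8 of Theorem~\ref{lp}).

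The only place that deserves a moment's care is the nontrivial direction of the ideal observation, namely $[\mathcal I]_p=\Z_p[d]\Rightarrow\mathcal I=\ZZ[d]$: choosing $u\in\mathcal I$ with $[u]_p=1$, write $u=1-pv$ with $v\in\ZZ[d]$; since $p^r=0$ in $\ZZ$, the element $pv$ is nilpotent in $\ZZ[d]$, so $u$ is a unit (with inverse $1+pv+\cdots+(pv)^{r-1}$), whence $1\in\mathcal I$ and $\mathcal I=\ZZ[d]$; the converse direction is immediate since $1$ maps to $1$. I do not anticipate any genuine obstacle here: once the minor ideals of $A(d)$ and of $[A(d)]_p$ are identified under reduction mod $p$, the argument is pure bookkeeping.
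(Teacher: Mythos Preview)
Your proposal is correct and follows exactly the route the paper intends: the paper does not give a separate proof but simply states that the lemma ``follows immediately'' from the observation that an ideal $\mathcal{I}\subseteq\ZZ[d]$ equals $\ZZ[d]$ iff $[\mathcal{I}]_p=\Z_p[d]$, together with the minor-ideal characterization of left primeness over a field. Your write-up just fills in the bookkeeping (compatibility of $[\cdot]_p$ with determinants, identification of $[\mathcal I]_p$ with the minor ideal of $[A(d)]_p$, and the unit argument for the nontrivial direction of the ideal observation) that the paper leaves implicit.
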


Now, we are in position to prove the following characterizations of left zero-prime matrices with entries in $\ZZ$ which can be considered as an extension of Theorem \ref{lp} to the finite ring case.

\begin{Thm}\label{lzp}
Let $A(d) \in \ZZ[d]^{k \times n}$. The following are equivalent:
\begin{enumerate}
\item $A(d)$ is left zero-prime;
\item there exist unimodular matrices $U(d) \in \ZZ[d]^{k \times k}$ and $V(d) \in \ZZ[d]^{n \times n}$ such that \mbox{$U(d)A(d)V(d)=[I_k \;\; 0]$};
\item there exists a unimodular matrix $V(d) \in \ZZ[d]^{n \times n}$ such that $A(d)V(d)=[I_k \;\; 0]$;
\item there exists $B(d) \in \ZZ[d]^{(n-k) \times n}$ such that $\left[\begin{matrix} A(d) \\ B(d) \end{matrix}\right]$ is unimodular;
\item $A(d)$ admits a polynomial right inverse;
\item for all $u(d) \in \ZZ(d)^k$, $u(d)A(d) \in \ZZ[d]^n$ implies that $u(d) \in \ZZ[d]^k$;
\item $\bar A(\alpha)$ has rank $k$,$\mod p$, for all $\alpha \in \bar{\mathbb Z}_p$, where $\bar {\mathbb Z}_p$ denotes the algebraic closure of $\mathbb Z_p$ and $\bar A(d)=[A(d)]_p$.
\end{enumerate}
\end{Thm}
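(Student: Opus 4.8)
The strategy is to push everything down to the field case by reduction modulo $p$, using the three tools already available: Lemma \ref{lzplp} ($\ell ZP$-ness over $\ZZ[d]$ is visible after projecting to $\Z_p[d]$), Theorem \ref{unimodular} (same for unimodularity), the lemma characterising polynomial right inverses via $[\cdot]_p$, and Theorem \ref{lp} itself. Concretely I would establish $(1)\Leftrightarrow(5)\Leftrightarrow(7)$, then $(5)\Rightarrow(4)\Rightarrow(3)\Rightarrow(5)$ together with $(2)\Leftrightarrow(3)$, then the easy $(5)\Rightarrow(6)$, and finally the delicate $(6)\Rightarrow(1)$.

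The first group is immediate: by Lemma \ref{lzplp}, $A(d)$ is $\ell ZP$ iff $[A(d)]_p$ is left prime over $\Z_p[d]$, and Theorem \ref{lp} then says the latter is equivalent both to $[A(d)]_p$ admitting a polynomial right inverse --- which by the right-inverse lemma is equivalent to $A(d)$ admitting one, i.e.\ $(5)$ --- and to $[A(d)]_p(\alpha)$ having rank $k$ for every $\alpha\in\bar{\Z}_p$, which is exactly $(7)$. Inside the block $(2),(3),(4),(5)$ everything is elementary except one step: $(3)\Rightarrow(2)$ takes $U(d)=I_k$; $(2)\Rightarrow(3)$ absorbs $U(d)^{-1}$ into $V(d)$; $(3)\Rightarrow(5)$ reads off the right inverse as the first $k$ columns of $V(d)$; $(4)\Rightarrow(3)$ reads $A(d)V(d)=[I_k\ 0]$ off the first $k$ rows of the inverse of $\begin{bmatrix}A(d)\\B(d)\end{bmatrix}$; and the substantive step $(5)\Rightarrow(4)$ goes: project mod $p$, use Theorem \ref{lp} to obtain $\bar B(d)$ completing $[A(d)]_p$ to a unimodular matrix over $\Z_p[d]$, lift $\bar B(d)$ arbitrarily to $B(d)\in\ZZ[d]^{(n-k)\times n}$, and apply Theorem \ref{unimodular}(3) to the stacked matrix. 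Finally $(5)\Rightarrow(6)$: if $A(d)R(d)=I_k$ with $R(d)$ polynomial and $u(d)A(d)\in\ZZ[d]^n$, then $u(d)=\bigl(u(d)A(d)\bigr)R(d)\in\ZZ[d]^k$.

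The remaining implication $(6)\Rightarrow(1)$ is the crux, and I would prove its contrapositive. Suppose $A(d)$ is not $\ell ZP$. By Lemma \ref{lzplp}, $[A(d)]_p$ is not left prime over $\Z_p[d]$, so by Theorem \ref{lp} there is $\bar u(d)=\bar v(d)/\bar q(d)\in\Z_p(d)^k\setminus\Z_p[d]^k$ with $\bar u(d)[A(d)]_p\in\Z_p[d]^n$; taking $\bar q(d)$ monic, some component $\bar v_i(d)$ fails to be divisible by $\bar q(d)$. Lift: choose any $v(d)\in\ZZ[d]^k$ with $[v(d)]_p=\bar v(d)$ and a lift $q(d)\in\ZZ[d]$ of $\bar q(d)$ having a unit leading coefficient and a unit lowest-degree coefficient (so $1/q(d)\in\ZZ(d)$), and put
$$
u(d):=\frac{p^{r-1}\,v(d)}{q(d)}\ \in\ \ZZ(d)^k.
$$
Writing $v(d)A(d)=q(d)w(d)+p\,z(d)$ --- possible because $\bar q(d)$ divides $\bar v(d)[A(d)]_p$ componentwise --- and using $p^{r}=0$ in $\ZZ$, one gets $p^{r-1}v(d)A(d)=q(d)\,p^{r-1}w(d)$, hence $u(d)A(d)=p^{r-1}w(d)\in\ZZ[d]^n$. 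On the other hand, Euclidean division of $v_i(d)$ by the unit-leading-coefficient polynomial $q(d)$ shows that $q(d)\mid p^{r-1}v_i(d)$ in $\ZZ[d]$ holds precisely when $\bar q(d)\mid\bar v_i(d)$ in $\Z_p[d]$; for the index $i$ chosen above this fails, so $u(d)\notin\ZZ[d]^k$. Thus $u(d)$ violates $(6)$. I expect this lifting to be the main obstacle: the factor $p^{r-1}$ is exactly what makes $u(d)A(d)$ land in $\ZZ[d]^n$ (via $p^{r}=0$), while keeping $u(d)$ off $\ZZ[d]^k$ forces the careful choice of $q(d)$ with unit leading and trailing coefficients, so that divisibility over $\ZZ[d]$ is faithfully detected over $\Z_p[d]$.
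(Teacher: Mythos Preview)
Your proof is correct and follows essentially the same route as the paper: both reduce everything to $\Z_p[d]$ via Lemma \ref{lzplp}, the right-inverse lemma and Theorem \ref{unimodular}, lift back for the structural items $(2)$--$(5)$, and handle the crux $(6)\Rightarrow(1)$ by contrapositive using the $p^{r-1}$-multiplication trick. The only differences are cosmetic (you cycle $(5)\to(4)\to(3)$ where the paper runs $(1)\to(2)\to(3)\to(4)\to(5)$), and your treatment of $(6)\Rightarrow(1)$---with the explicit choice of a lift $q(d)$ having unit leading and trailing coefficients and the Euclidean-division check that $q\mid p^{r-1}v_i$ iff $\bar q\mid\bar v_i$---is in fact more carefully argued than the paper's somewhat telegraphic version.
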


\begin{proof}
  From Theorems \ref{lp} and \ref{unimodular} and Lemma \ref{lzplp} we immediately conclude that $2) \, \Rightarrow 1)$, $3) \, \Rightarrow 1)$, $4) \, \Rightarrow 1)$, $5) \, \Rightarrow 1)$, $7) \, \Rightarrow 1)$ and $1) \, \Rightarrow 7)$. Next we prove the implications $1) \, \Rightarrow 2)$, $2) \, \Rightarrow 3)$, $3) \Rightarrow 4)$, $4) \Rightarrow 5)$, $5) \Rightarrow 6)$ and $6) \Rightarrow 1)$.

\vspace{.2cm}

$1) \, \Rightarrow 2)$: Since $A(d)$ is $\ell ZP$, $[A(d)]_p$ is left prime over $\mathbb Z_p[d]$ and therefore there exist unimodular matrices $U(d) \in \mathbb Z_p[d]^{k \times k}$ and $V(d) \in \mathbb Z_p[d]^{n \times n}$ such that
\[
U(d)[A(d)]_pV(d) = [I_k \;\; 0] \mod \, p.
\]
Considering $U(d)$ and $V(d)$ as matrices over $\mathbb Z_{p^r}[d]$ we have that
\[
U(d)A(d)V(d)=[X_1(d) \;\; X_2(d)],
\]
with $X_1(d) \in \ZZ[d]^{k \times k}$ and $X_2(d) \in \ZZ[d]^{k \times (n-k)}$. Note that, $X_1(d)$ is unimodular because $[X_1(d)]_p=I_k$ and that $U_1(d)=X_1(d)^{-1}U(d)$ and $$
V_1(d)=V(d)\left[ \begin{array}{cc}
I_k & -X_1(d)^{-1}X_2(d) \\
  0 & I_{n-k}\end{array}\right]
$$ are polynomial matrices. It is easy too see that $U_1(d)$ and $V(d)$ are unimodular matrices, and that
\[
U_1(d)A(d)V_1(d)=[I_k \;\; 0].
\]

\vspace{.2cm}

$2) \Rightarrow 3)$: Let $U(d) \in \ZZ[d]^{k \times k}$ and $V(d) \in \ZZ[d]^{n \times n}$ be unimodular matrices such that \mbox{$U(d)A(d)V(d)=[I_k \;\; 0]$}. Then $A(d)V(d)=[U(d)^{-1}\;\; 0]$ and, therefore
$$
V_1(d)=V(d)\left[ \begin{array}{cc}
U(d) & 0 \\
  0 & I_{n-k}\end{array}\right]
$$
is a unimodular matrix such that $A(d)V_1(d)=[I_k \;\; 0]$.

\vspace{.2cm}

$3) \Rightarrow 4)$: From the assumption $A(d)=[I_k \; 0] \tilde V(d)$ for some unimodular matrix $\tilde V(d) \in \ZZ[d]^{n \times n}$, i.e., $A(d)$ is the submatrix of $\tilde V(d)$ constituted by its first $k$ rows.

\vspace{.2cm}

$4) \Rightarrow 5)$: Let $[X(d) \;\; Y(d)]$ with $X(d) \in \ZZ[d]^{n \times k}$ and $Y(d) \in \ZZ[d]^{n \times (n-k)}$ be such that
\[
\left[
\begin{array}{c}
A(d) \\ B(d)
\end{array}
\right]
\left[
\begin{array}{cc}
X(d) & Y(d)
\end{array}
\right]=I_n.
\]
Then $A(d)X(d)=I_k$.

\vspace{.2cm}

$5) \Rightarrow 6)$: Let $u(d) \in \ZZ(d)^k$ be such that $u(d)A(d)=w(d) \in \ZZ[d]^n$ and let $X(d) \in \ZZ[d]^{n \times k}$ be a right inverse of $A(d)$. Then $u(d)=w(d)X(d)$, which is a polynomial vector.

\vspace{.2cm}

$6) \Rightarrow 1)$: Let us assume that $A(d)$ is not $\ell ZP$. Then $[A(d)]_p$ is not left prime over $\mathbb Z_p[d]$, and therefore there exists a nonpolynomial vector $u(d) \in \mathbb Z_p(d)^k$ such that $u(d)[A(d)]_p \in \mathbb Z_p[d]^n \mod p$. Considering $u(d)$ as a vector over $\mathbb Z_{p^r}[d]$, it follows that $p^{r-1}u(d) \in \ZZ[d]^k$ is also nonpolynomial and $p^{r-1}u(d)A(d) \in \ZZ[d]^n$. \end{proof}

\section{Distance properties of free convolutional codes over $\ZZ$}\label{sec:dist}

In this section we first recall the basic definitions of convolutional codes over $\ZZ$. We consider convolutional codes as free $\ZZ[d]$-submodules of $\ZZ[d]^n$, for some $n \in \mathbb N$, see \cite{ku07,NaPiTo17,NaPiTo16,so07}. We require the encoding map to be injective and therefore focus on free submodules of $\ZZ[d]^n$. We note that different definitions have been considered in the literature, see for instance \cite{noemi17,ku08,ku09,el13}. The non-free case lies beyond the scope of this work but it can also be treated using the theory of $p$-basis and $p$-generating sequences, see for instance \cite{ku09,KuPi17,NaPiTo17,el13}.

\subsection{Convolutional codes}

\begin{Def}
A convolutional code $\cal C$ of rate $k/n$ is a free submodule of $\ZZ[d]^n$ of rank $k$. A matrix $G(d) \in \ZZ[d]^{k \times n}$ whose rows form a basis of $\cal C$ is called an encoder of $\cal C$.
\end{Def}

Thus, an encoder of $\cal C$ is a full row rank matrix $G(d) \in \ZZ[d]^{k \times n}$ such that
\begin{eqnarray*}
{\cal C} & = & \im_{\ZZ[d]} G(d)  =  \{u(d)G(d) \, : \, u(d) \in \ZZ[d]^k\}.
\end{eqnarray*}


Equivalent encoders are full row rank matrices that are encoders of the same code. Then two equivalent encoders $G_1(d), G_2(d) \in \ZZ[d]^{k \times n}$ are such that $G_2(d)=U(d)G_1(d)$ for some unimodular matrix $U(d) \in \ZZ[d]^{k \times k}$.
%
%
Thus, it follows that if a convolutional code admits a left zero-prime encoder then all its encoders are also left zero-prime. We call such codes \emph{noncatastrophic} codes and they are the ones that admit a kernel representation as stated in the following theorem. 

\begin{Thm}
Let $\cal C$ be a convolutional code of rate $k/n$. Then, there exists a full column rank polynomial matrix $H(d) \in \ZZ[d]^{n \times (n-k)}$ such that
\begin{eqnarray*}
{\cal C} & = & \ker_{\ZZ[d]} \, H(d)  = \{w(d) \in \ZZ[d]^n \, : \, w(d)H(d) = 0\}
\end{eqnarray*}
if and only if $\cal C$ is noncatastrophic.
\end{Thm}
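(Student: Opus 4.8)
The plan is to prove the two implications separately, leaning heavily on the characterization of left zero-prime matrices in Theorem \ref{lzp}.

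For the "if" direction, suppose $\cal C$ is noncatastrophic, i.e., it admits a left zero-prime encoder $G(d) \in \ZZ[d]^{k \times n}$. By Theorem \ref{lzp} (item 4), there exists $B(d) \in \ZZ[d]^{(n-k) \times n}$ such that the stacked matrix $\left[\begin{matrix} G(d) \\ B(d) \end{matrix}\right]$ is unimodular in $\ZZ[d]^{n \times n}$. Let $[X(d) \;\; H(d)]$ be its inverse, partitioned so that $X(d) \in \ZZ[d]^{n \times k}$ and $H(d) \in \ZZ[d]^{n \times (n-k)}$. From $\left[\begin{matrix} G(d) \\ B(d) \end{matrix}\right][X(d) \;\; H(d)] = I_n$ one reads off $G(d)H(d) = 0$ and $B(d)H(d) = I_{n-k}$; the latter forces $H(d)$ to have full column rank $n-k$ (indeed $[H(d)]_p$ has a left inverse mod $p$, so its rank is $n-k$). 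This gives $\cal C \subseteq \ker_{\ZZ[d]} H(d)$. For the reverse inclusion, take $w(d)$ with $w(d)H(d)=0$. Write $w(d) = w(d) I_n = w(d)[X(d) \;\; H(d)]\left[\begin{matrix} G(d) \\ B(d) \end{matrix}\right] = (w(d)X(d))\,G(d) + (w(d)H(d))\,B(d) = (w(d)X(d))\,G(d)$, so $w(d) \in \im_{\ZZ[d]} G(d) = \cal C$. Hence $\cal C = \ker_{\ZZ[d]} H(d)$ with $H(d)$ of full column rank.

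For the "only if" direction, suppose $\cal C = \ker_{\ZZ[d]} H(d)$ for some full column rank $H(d) \in \ZZ[d]^{n \times (n-k)}$, and let $G(d)$ be any encoder. We must show $G(d)$ is left zero-prime; by Theorem \ref{lzp} it suffices to check, say, item 6): if $u(d) \in \ZZ(d)^k$ satisfies $u(d)G(d) \in \ZZ[d]^n$, then $u(d) \in \ZZ[d]^k$. Set $w(d) = u(d)G(d) \in \ZZ[d]^n$. Since the rows of $G(d)$ lie in $\cal C = \ker H(d)$, we have $G(d)H(d) = 0$, so $w(d)H(d) = u(d)G(d)H(d) = 0$, i.e., $w(d) \in \cal C$. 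Thus $w(d) = v(d)G(d)$ for some $v(d) \in \ZZ[d]^k$, and $(u(d) - v(d))G(d) = 0$ with $u(d)-v(d) \in \ZZ(d)^k$. Because $G(d)$ has full row rank over the ring of fractions $\ZZ(d)$ — its rows are $\ZZ(d)$-linearly independent, as they are a basis of a rank-$k$ free module — we conclude $u(d) - v(d) = 0$, so $u(d) = v(d) \in \ZZ[d]^k$.

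The main obstacle I anticipate is the full-row-rank-over-$\ZZ(d)$ argument in the last step: over the ring $\ZZp(d)$, which is not a field (it has zero divisors inherited from $\ZZ$), one must be careful about what "linearly independent rows" means and why $v(d)G(d) = 0$ forces $v(d) = 0$ for $v(d) \in \ZZ(d)^k$. The cleanest route is to reduce mod $p$ at the top level $p^{r-1}$: if $v(d)G(d)=0$ but $v(d) \neq 0$, scale $v(d)$ by a power of $p$ (and clear denominators) so that $p^{r-1}$ divides all but not every coordinate of the resulting polynomial vector $\tilde v(d)$; then $[\tilde v(d)/p^{r-1}]_p$ is a nonzero vector over $\Z_p(d)$ annihilating $[G(d)]_p$, contradicting that $[G(d)]_p$ — an encoder of a $k$-dimensional convolutional code over the field $\Z_p$ — has full row rank over $\Z_p(d)$. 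A short lemma to this effect, or a forward reference to the standard fact that equivalent encoders differ by a unimodular factor (stated just before the theorem), should suffice to close the gap cleanly.
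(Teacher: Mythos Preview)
Your proof is correct. The ``if'' direction is exactly the paper's argument: complete $G(d)$ to a unimodular matrix via Theorem~\ref{lzp}(4), take $H(d)$ to be the last $n-k$ columns of the inverse, and use the inverse identity to get both inclusions.

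For the ``only if'' direction you take a somewhat different route. The paper argues by contradiction: if $G(d)$ is not $\ell ZP$ then $[G(d)]_p$ is not left prime, so $[G(d)]_p = X(d)\tilde G(d)$ with $X(d)$ invertible but non-unimodular over $\Z_p[d]$; multiplying by $p^{r-1}$ and picking a non-polynomial row $\ell_i(d)$ of $p^{r-1}X(d)^{-1}$ produces a polynomial vector $\ell_i(d)G(d)\in\ker H(d)\setminus\mathcal C$. You instead verify criterion~(6) of Theorem~\ref{lzp} directly: any $u(d)\in\ZZ(d)^k$ with $u(d)G(d)$ polynomial lands in $\ker H(d)=\mathcal C$, hence equals $v(d)G(d)$ for some polynomial $v(d)$, and you then cancel $G(d)$. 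Both arguments ultimately hinge on the same fact---that $G(d)$ has $\ZZ(d)$-linearly independent rows---and the paper relies on it just as implicitly as you do (its claim that $\ell_i(d)G(d)\notin\mathcal C$ ``because $\ell_i(d)$ is not polynomial'' is precisely this statement). Your version is arguably cleaner, since it avoids constructing the explicit factorization of $[G(d)]_p$.

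Regarding the obstacle you flag: the resolution is simpler than your mod-$p$ reduction. If $r(d)\in\ZZ(d)^k$ satisfies $r(d)G(d)=0$, let $q(d)\in\ZZ[d]$ be the product of the denominators; since each denominator has a unit as its lowest-degree coefficient, so does $q(d)$, and hence $q(d)$ is a non-zero-divisor in $\ZZ[d]$. Then $q(d)r(d)\in\ZZ[d]^k$ with $(q(d)r(d))G(d)=0$, so $q(d)r(d)=0$ because the rows of $G(d)$ are a basis of a free $\ZZ[d]$-module, and cancelling the non-zero-divisor $q(d)$ gives $r(d)=0$. Your mod-$p$ route also works, but the phrase ``$p^{r-1}$ divides all but not every coordinate'' is garbled; you presumably mean to scale so that $\tilde v(d)\in p^{r-1}\ZZ[d]^k$ with $\tilde v(d)\neq 0$, then reduce $\tilde v(d)/p^{r-1}$ modulo $p$ and use that $[G(d)]_p$ has full row rank over $\Z_p[d]$.
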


\begin{proof}
Let us assume first that $\cal C$ is noncatastrophic and let $G(d) \in \ZZ[d]^{k \times n}$ be an encoder of $\cal C$. Then, $G(d)$ is $\ell ZP$ and therefore, by Theorem \ref{lzp}, there exist polynomial matrices $B(d) \in \mathbb F[d]^{(n-k)\times n}$, $X(d) \in \mathbb F[d]^{n \times k}$ and $H(d) \in \mathbb F[d]^{n \times (n-k)}$ such that
\[
\left[\begin{matrix} G(d) \\ B(d) \end{matrix}\right] \left[\begin{matrix} X(d) & H(d) \end{matrix}\right]=\left[\begin{matrix} X(d) & H(d) \end{matrix}\right] \left[\begin{matrix} G(d) \\ B(d) \end{matrix}\right]=I_n
\]
This means that $H(d)$ is a full column rank matrix such that $G(d)H(d)=0$, and therefore ${\cal C} \subset \ker_{\ZZ[d]} \, H(d)$. On the other hand, if $w(d) \in \ker_{\ZZ[d]} \, H(d)$ we have that
\begin{eqnarray*}
w(d) & = & w(d)\left[\begin{matrix} X(d) & H(d) \end{matrix}\right] \left[\begin{matrix} G(d) \\ B(d) \end{matrix}\right]\\
& = & \left[\begin{matrix}  w(d) X(d) & 0 \end{matrix}\right] \left[\begin{matrix} G(d) \\ B(d) \end{matrix}\right] \\
& = & u(d)G(d),
\end{eqnarray*}
where $u(d)= w(d) X(d) \in \ZZ[d]^k$, i.e. $w(d) \in \cal C$.

For the converse let us assume that $\cal C$ is not a noncatastrophic convolutional code  and that ${\cal C} = \ker_{\ZZ[d]} \, H(d)$ for some full column rank matrix $H(d) \in \ZZ[d]^{n \times (n-k)}$. Let $G(d)$ be an encoder of $\cal C$. Then, since $G(d)$ is not left zero-prime,
\[
[G(d)]_p = X(d) \tilde G(d) \mod p
\]
for some invertible but nonunimodular matrix $X(d) \in \mathbb Z_p[d]^{k \times k}$ and $\tilde G(d) \in \mathbb Z_p[d]^{k \times n}$. Considering $X(d)$ and $\tilde G(d)$ as matrices over $\mathbb Z_{p^r}[d]$, we have that
$$p^{r-1}G(d)=p^{r-1}X(d) \tilde G(d)
$$
and therefore
$$
p^{r-1}X(d)^{-1}G(d)=p^{r-1}\tilde G(d).
$$
 Since $X(d)$ is not unimodular, there exists an $i \in \{1, \dots,k\}$ such that the $i$-th row of $p^{r-1}X(d)^{-1}$ is not polynomial. Let us represent such row by $\ell_i(d)$, i.e., $\ell_i(d)=e_ip^{r-1}X(d)$, where $e_i$ is the $i$-th vector of the canonical basis. Then $\ell_i(d)G(d)$ does not belong to $\cal C$ because $\ell_i(d)$ is not polynomial, but since $\ell_i(d)G(d)=e_ip^{r-1}\tilde G(d)$ is a polynomial vector, it follows that   $\ell_i(d)G(d)$ belongs to $\ker_{\ZZ[d]} \, H(d)$, which is a contradiction.
\end{proof}

If $\cal C$ is a noncatastrophic convolutional code, then a full column rank polynomial matrix $H(d)$ such that ${\cal C} = \ker_{\ZZ[d]} H(d)$ is called a \emph{parity-check matrix} of $\cal C$.

\vspace{.2cm}

We conclude this section by giving a result on the relation between the order of an information sequence $u(d)$ and the corresponding codeword $w(d)=u(d)G(d)$ where $G(d)$ is an encoder. This relation will be useful later on the paper.

\vspace{.2cm}

Let $a \in \ZZ$. We define the order of $a$ to be $\ell$, and we write $ord(a)=\ell$, if the set $a \ZZ$ has $p^{\ell}$ elements. Then, $ord(a)=\ell$ if and only if $p^{\ell-1}a$ is a nonzero element of $p^{r-1} \ZZ$ and $p^{\ell}a=0$. In the same way we define the order of a polynomial vector $w(d) \in \ZZ[d]^m$ to be $\ell$, and we write $ord(w)=\ell$, if $p^{\ell-1} w(d) \neq 0$ and $p^{\ell} w(d)=0$. This means that $p^{\ell-1}w(d)$ is a nonzero element of $p^{r-1}\ZZ[d]^m$. The following lemma relates the orders of an information sequence and the corresponding codeword. We omit the simple proof.

\begin{Lem} \label{ord}
Let $\cal C$ be a convolutional code of rate $k/n$, $G(d)$ an encoder of $\cal C$ and $w(d)=u(d)G(d)$, with $u(d) \in \ZZ[d]^k$, a codeword of $\cal C$. Then
$$
ord(w)=ord(u).
$$
\end{Lem}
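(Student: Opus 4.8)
The statement to prove is Lemma~\ref{ord}: for a convolutional code $\cal C$ of rate $k/n$ with encoder $G(d)$ and codeword $w(d)=u(d)G(d)$, $u(d)\in\ZZ[d]^k$, one has $ord(w)=ord(u)$.

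The plan is to prove the two inequalities $ord(w)\le ord(u)$ and $ord(w)\ge ord(u)$ separately. The first is immediate and requires no special structure: if $ord(u)=\ell$, then $p^\ell u(d)=0$, hence $p^\ell w(d)=p^\ell u(d)G(d)=0$, so $ord(w)\le\ell$. This uses only that $w(d)=u(d)G(d)$ and needs nothing about $G(d)$ being an encoder. For the reverse inequality I would use that $G(d)$ is a \emph{full row rank} matrix over $\ZZ[d]$, which is what being an encoder guarantees. Concretely, if $ord(u)=\ell$ then $p^{\ell-1}u(d)$ is a nonzero element of $p^{r-1}\ZZ[d]^k$, so I need to show $p^{\ell-1}w(d)=p^{\ell-1}u(d)G(d)\ne 0$. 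Suppose for contradiction that $p^{\ell-1}u(d)G(d)=0$. Write $v(d)=p^{\ell-1}u(d)$, a nonzero vector all of whose coefficients lie in $p^{r-1}\ZZ$, i.e. $v(d)=p^{r-1}\bar v(d)$ where $[\bar v(d)]_p\ne 0$. Then $p^{r-1}\bar v(d)G(d)=0$, which says $\bar v(d)G(d)\equiv 0 \bmod p$, i.e. $[\bar v(d)]_p\,[G(d)]_p=0$ over $\Z_p[d]$ with $[\bar v(d)]_p$ a nonzero vector in $\Z_p(d)^k$ (indeed in $\Z_p[d]^k$).

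The main obstacle, then, is to rule out a nonzero $\Z_p$-linear dependence among the rows of $[G(d)]_p$. This does \emph{not} follow merely from $G(d)$ having full row rank over $\ZZ[d]$ in the naive sense — one must use that the rows of $G(d)$ form a basis of a \emph{free} module, equivalently that $G(d)$ is full row rank in the appropriate sense for $\ZZ[d]$. The cleanest route: if $[G(d)]_p$ had a nonzero left kernel vector over $\Z_p(d)$, one could clear denominators (using that $\Z_p[d]$ is a PID, or rather a UFD, so $\Z_p(d)$ is its fraction field) to get $y(d)[G(d)]_p=0$ with $y(d)\in\Z_p[d]^k$ nonzero and not all entries divisible by $d$ — or simply $y(d)\in\Z_p[d]^k\setminus\{0\}$. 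Lifting $y(d)$ to $\ZZ[d]$ and multiplying by $p^{r-1}$ as in the proof of Theorem~\ref{lzp} (the $6)\Rightarrow 1)$ argument), one obtains $p^{r-1}y(d)G(d)=0$ with $p^{r-1}y(d)\ne 0$, contradicting the fact that the rows of $G(d)$ are $\ZZ[d]$-linearly independent (they form a basis of the free module $\cal C$). Hence $[G(d)]_p$ has full row rank $k$ over $\Z_p[d]$, so $[\bar v(d)]_p[G(d)]_p=0$ forces $[\bar v(d)]_p=0$, a contradiction. This gives $p^{\ell-1}w(d)\ne 0$, hence $ord(w)\ge\ell=ord(u)$.

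Combining the two inequalities yields $ord(w)=ord(u)$. I would remark that the only properties of $G(d)$ used are that its rows are $\ZZ[d]$-linearly independent; in particular no primeness of $G(d)$ is needed, which matches the authors' comment that the proof is simple. The slightly delicate point worth spelling out carefully is the passage from $\ZZ[d]$-linear independence of the rows of $G(d)$ to $\Z_p[d]$-linear independence of the rows of $[G(d)]_p$ — this is exactly the reduction-mod-$p$ trick (multiply a mod-$p$ relation, lifted arbitrarily, by $p^{r-1}$) already exploited in the proof of Theorem~\ref{lzp}, so I would phrase it to mirror that argument.
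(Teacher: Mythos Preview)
The paper omits the proof of this lemma entirely (``We omit the simple proof''), so there is no argument to compare against. Your proof is correct and is the natural one: the inequality $ord(w)\le ord(u)$ is trivial, and for the reverse you correctly reduce to showing that $[G(d)]_p$ has linearly independent rows over $\Z_p[d]$, which you derive from the $\ZZ[d]$-linear independence of the rows of $G(d)$ via the same lift-and-multiply-by-$p^{r-1}$ trick used in the proof of Theorem~\ref{lzp}.
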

%


\subsection{Distance properties}

Next we study the free distance and column distances of a convolutional code over $\ZZ$. Such distances were also investigated in \cite{NaPiTo17,NaPiTo16,el13} for not necessarily free convolutional codes using the notion of $p$-basis, see \cite{NaPiTo16,el13}. For the free case addressed in this work we introduce the notion of $b$-degree of a code and derive new bounds on the free and column distance in terms of the length, dimension and $b$-degree of the code. First, we formally present the definitions of free distance and column distance. \\

The free distance of a convolutional code is defined as
$$
d_{\rm free}({\cal C})= {\rm min}\{\wt(w(d)) \, : \, w(d) \in {\cal C}, \, w(d) \neq 0\}
$$
where for $w(d)= \sum_{i \in \mathbb N_0} w_i d^i$, $\wt(w(d))= \sum_{i \in \mathbb N_0} \wt(w_i)$, with $\wt(w_i)$ to be the number of nonzero entries of $w_i$. Let $[{\cal C}]_p=\lbrace [w(d)]_p:\;w(d)\in \cal C  \rbrace$ and define $$d_{\rm free}([{\cal C}]_p)=min\lbrace wt(v(d)):\; v(d) \in[{\cal C}]_p,\, v(d)\neq 0 \rbrace.$$
In \cite[Theorem 5.3]{el13} it was shown that
\begin{equation} \label{dfree}
d_{\rm free}({\cal C}) \geq d_{\rm free}([{\cal C}]_p).
\end{equation}
This can be alternatively shown as follows. Note that $[{\cal C}]_p \simeq p^{r-1} {\cal C}$ and that $d_{\rm free}([{\cal C}]_p)= d_{\rm free}(p^{r-1} {\cal C})$ with
$d_{\rm free}(p^{r-1} {\cal C})= {\rm min}\{\wt(p^{r-1}w(d)) \, : \, w(d) \in {\cal C}, \, [w(d)]_p \neq 0\} $. Let $w(d)$ be a nonzero codeword of ${\cal C}$ of order $\ell$. Lemma \ref{ord} implies that $p^{\ell-1} w(d)$ is a nonzero vector of $p^{r-1}{\cal C}$, and since
$$
\wt(w(d)) \geq \wt(p^{\ell-1} w(d))
$$
the inequality (\ref{dfree}) follows. Next theorem shows that inequality (\ref{dfree}) is in fact an equality.

\begin{Thm}\label{thmequaldfree}
Let ${\cal C}$ be a convolutional code. Then
$$
d_{\rm free}({\cal C}) = d_{\rm free}([{\cal C}]_p).
$$
\end{Thm}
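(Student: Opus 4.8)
The plan is to establish the reverse inequality $d_{\rm free}({\cal C}) \leq d_{\rm free}([{\cal C}]_p)$, since inequality (\ref{dfree}) already gives $d_{\rm free}({\cal C}) \geq d_{\rm free}([{\cal C}]_p)$. To this end, I would start from a minimal-weight nonzero element of $[{\cal C}]_p$ and lift it to a codeword of ${\cal C}$ of the same weight. Concretely, pick $v(d) \in [{\cal C}]_p$ with $v(d) \neq 0$ and $\wt(v(d)) = d_{\rm free}([{\cal C}]_p)$. Using the isomorphism $[{\cal C}]_p \simeq p^{r-1}{\cal C}$ noted in the text, $v(d)$ corresponds to a nonzero element $p^{r-1}w(d) \in p^{r-1}{\cal C}$ with $w(d) \in {\cal C}$, and $\wt(p^{r-1}w(d)) = \wt(v(d)) = d_{\rm free}([{\cal C}]_p)$.

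The key observation is that the nonzero codeword $z(d) := p^{r-1}w(d)$ is itself an element of ${\cal C}$ (since ${\cal C}$ is a $\ZZ[d]$-module and $p^{r-1} \in \ZZ$), and it is nonzero because $[w(d)]_p = v(d) \neq 0$. Hence $z(d)$ is a legitimate nonzero codeword of ${\cal C}$, so by definition $d_{\rm free}({\cal C}) \leq \wt(z(d)) = d_{\rm free}([{\cal C}]_p)$. Combining this with (\ref{dfree}) yields $d_{\rm free}({\cal C}) = d_{\rm free}([{\cal C}]_p)$, as claimed. Note that this argument does not even require $G(d)$ to be an encoder, nor Lemma \ref{ord}; it only uses that ${\cal C}$ is closed under multiplication by the scalar $p^{r-1}$ and that the projection $[\,\cdot\,]_p$ restricted to $p^{r-1}{\cal C}$ is weight-preserving.

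The step requiring the most care is justifying $\wt(p^{r-1}w(d)) = \wt([w(d)]_p)$ and that this quantity can actually be realized as $d_{\rm free}([{\cal C}]_p)$: one must check that for any $a \in \ZZ$, the entry $p^{r-1}a$ is nonzero in $\ZZ$ precisely when $[a]_p$ is nonzero in $\Z_p$ (equivalently, when the lowest coefficient $\alpha_0$ of the $p$-adic expansion of $a$ is nonzero), so multiplication by $p^{r-1}$ and projection mod $p$ produce supports of the same cardinality componentwise. This is the essential content of the isomorphism $[{\cal C}]_p \simeq p^{r-1}{\cal C}$ being a Hamming-isometry, and it is entirely elementary; there is no genuine obstacle, only bookkeeping. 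Thus the theorem follows by exhibiting a single well-chosen codeword, and the proof is short.
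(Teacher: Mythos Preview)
Your proposal is correct and follows essentially the same argument as the paper: take a nonzero $v(d)\in[{\cal C}]_p$, choose a preimage $w(d)\in{\cal C}$ with $[w(d)]_p=v(d)$, and observe that $p^{r-1}w(d)$ is a nonzero codeword of ${\cal C}$ with $\wt(p^{r-1}w(d))=\wt(v(d))$, giving $d_{\rm free}({\cal C})\leq d_{\rm free}([{\cal C}]_p)$. Your additional justification that multiplication by $p^{r-1}$ and reduction mod $p$ produce identical supports is exactly the content behind the paper's one-line equality $\wt(p^{r-1}\tilde w(d))=\wt(w(d))$.
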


\begin{proof}
We only need to prove that
$$
d_{\rm free}({\cal C}) \leq d_{\rm free}([{\cal C}]_p).
$$
For that let $w(d)$ be a nonzero codeword of $[{\cal C}]_p$. Then there exists $\tilde w(d) \in {\cal C}$ such that $[{\tilde w}(d)]_p=w(d)$. Thus, $p^{r-1}\tilde w(d) \in {\cal C}$ with $\wt(p^{r-1} \tilde w(d))=\wt(w(d))$, which implies that $d_{\rm free}({\cal C}) \leq d_{\rm free}([{\cal C}]_p)$.
\end{proof}

The maximum value that the free distance of a convolutional code over a finite field of rate $k/n$ can attain depends also of the degree of the code which is defined as the
maximum of the degrees of the determinants of the submatrices of one and hence any generator matrix of ${\cal C}$. If ${\cal C}$ is a convolutional code over a finite field of rate $k/n$ and degree $\delta$, then
\begin{eqnarray}
d_{\rm free}({{ \cal C}}) \leq (n-k) \left(\lfloor \frac{\delta}{k}\rfloor+1\right) + \delta + 1. \label{dfreemaj}
\end{eqnarray}
This upper bound is called the Generalized Singleton bound and was found first in \cite{ro99} in the field case and then extended in \cite{NaPiTo16,el13} for the ring case using the notion of $p$-degree that is not used in this work.

For the case of free modules of $\ZZ^n$ that is considered here we will obtain a new expression for the bound on the free distance of a free convolutional code ${\cal C}$. For that we need to introduce the novel concept of $b$-degree of ${\cal C}$.

\begin{Def}\label{bdegree}
Let ${\cal C}$ be a convolutional code over $\ZZ[d]$. The $b$-degree of ${\cal C}$ is equal to the degree of $[{{\cal C}}]_p$.
\end{Def}

The $b$-degree of a convolutional code ${\cal C}$ can be easily obtained by calculating the maximum degree of the full size minors of $[G(d)]_p$ mod $p$, for any encoder $G(d)$ of ${\cal C}$.

The next result follows immediately from Theorem \ref{thmequaldfree}, Definition \ref{bdegree} and from the expression (\ref{dfreemaj}).
%

\begin{Thm}
Let ${\cal C}$ be a convolutional code of rate $k/n$ and $b$-degree $\delta$. Then
$$
d_{\rm free}({\cal C}) \leq (n-k) \left(\lfloor \frac{\delta}{k}\rfloor+1\right) + \delta + 1.
$$
\end{Thm}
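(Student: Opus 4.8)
The plan is to assemble the statement from three ingredients already available in the excerpt, so the proof is essentially a one-line chain of implications rather than a computation. First I would invoke Theorem \ref{thmequaldfree} to replace $d_{\rm free}(\mathcal{C})$ by $d_{\rm free}([\mathcal{C}]_p)$, so that the bound to be proved is now a statement purely about the convolutional code $[\mathcal{C}]_p$ over the finite field $\mathbb{Z}_p$. Second, by Definition \ref{bdegree}, the $b$-degree $\delta$ of $\mathcal{C}$ is exactly the degree of $[\mathcal{C}]_p$ in the ordinary sense, so the parameters match up: $[\mathcal{C}]_p$ is a convolutional code over $\mathbb{Z}_p$ whose rate and degree are $k/n$ and $\delta$. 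Third, I would apply the Generalized Singleton bound (\ref{dfreemaj}), which holds for any convolutional code over a finite field, to $[\mathcal{C}]_p$ with these parameters, obtaining $d_{\rm free}([\mathcal{C}]_p) \leq (n-k)(\lfloor \delta/k\rfloor + 1) + \delta + 1$. Concatenating the equality from the first step with this inequality yields the claim.

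The one point that deserves a sentence of care — and which I expect to be the only genuine obstacle — is verifying that $[\mathcal{C}]_p$ is indeed a convolutional code of rate $k/n$, i.e. a $k$-dimensional $\mathbb{Z}_p[d]$-submodule of $\mathbb{Z}_p[d]^n$, rather than something of smaller dimension. This is needed both so that $d_{\rm free}([\mathcal{C}]_p)$ is the free distance of an honest rate-$k/n$ code (so the Generalized Singleton bound applies with the right $k$ and $n$) and so that its degree is well defined. Here I would use the isomorphism $[\mathcal{C}]_p \simeq p^{r-1}\mathcal{C}$ already noted in the excerpt, together with Lemma \ref{ord}: if $G(d)$ is an encoder of $\mathcal{C}$, then $[G(d)]_p$ generates $[\mathcal{C}]_p$, and one checks that $[G(d)]_p$ has full row rank $k$ over $\mathbb{Z}_p[d]$ — equivalently that no nontrivial $\mathbb{Z}_p$-combination of its rows vanishes mod $p$, which follows because such a combination would force a corresponding nonpolynomial relation contradicting that the rows of $G(d)$ form a $\mathbb{Z}_{p^r}[d]$-basis. (In fact the excerpt already anticipates this, since the remark following Definition \ref{bdegree} asserts the $b$-degree is the maximum degree of the full-size minors of $[G(d)]_p$, which presupposes $[G(d)]_p$ has rank $k$.)

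Given all of this, I would write the proof as the two-step chain indicated, citing Theorem \ref{thmequaldfree} for the equality, Definition \ref{bdegree} for the identification of the $b$-degree with the field-degree of $[\mathcal{C}]_p$, and the field-case Generalized Singleton bound (\ref{dfreemaj}) for the final inequality. No new estimates are required, and the argument is short precisely because the hard analytic work — establishing $d_{\rm free}(\mathcal{C}) = d_{\rm free}([\mathcal{C}]_p)$ — has already been done in Theorem \ref{thmequaldfree}.
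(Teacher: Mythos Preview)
Your proposal is correct and matches the paper's own argument exactly: the paper states that the result ``follows immediately from Theorem \ref{thmequaldfree}, Definition \ref{bdegree} and from the expression (\ref{dfreemaj}),'' which is precisely the three-ingredient chain you describe. Your additional paragraph verifying that $[G(d)]_p$ has full row rank $k$ (so that $[\mathcal{C}]_p$ really is a rate-$k/n$ code) is a worthwhile clarification that the paper leaves implicit.
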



A convolutional code of rate $k/n$ and $b$-degree $\delta$ with free distance $(n-k) \left(\lfloor \frac{\delta}{k}\rfloor+1\right) + \delta + 1$ is said to be a Maximum Distance Separable (MDS) code. It follows immediately from Theorem \ref{thmequaldfree} that a convolutional code $\cal C$ is MDS if and only if $[{\cal C}]_p$ is also MDS over $\mathbb Z_p[d]$.


\vspace{.2cm}

Another type of distances of a convolutional code which can be very useful in sequential decoding and have showed to have a potential use in streaming applications are the column distances \cite{Lieb2018}.

\vspace{.2cm}

Let $G(d)$ be an encoder of $\cal C$ and let us write
$
G(d)=\sum_{i=0}^{\nu} G_i d^i, \;\;\; G_i \in \ZZ^{k \times n}.
$
The codeword $w(d)=\sum_{i \in \mathbb N_0} w_id^i$, $w_i \in \ZZ^n$, corresponding to $u(d)=\sum_{i \in \mathbb N_0} u_i d^i$, $u_i \in \ZZ^k$, is such that
\[
[w_0 \; w_1 \; \cdots \; w_j]= [u_0 \; u_1 \; \cdots \; u_j] G^c_j
\]
where
\[
G^c_j=\left[
\begin{array}{cccc}
G_0 & G_1 & \cdots & G_j \\
& G_0 & \dots & G_{j-1} \\
& & \ddots & \\
& & & G_0
\end{array}
\right]
\]
is called the truncated sliding matrix corresponding to $G(d)$ (we consider $G_j=0$, if $j > \nu$), \cite{gl03,NaPiTo17}.

\begin{Def}\cite{NaPiTo17,TosteTese}
Given an encoder $G(d)$ of a convolutional code $\cal C$, we define the $j$-th column distance of $G(d)$
as
\[
d^c_j(G)=\min\{\wt([u_0 \; u_1 \; \cdots \; u_j] G^c_j): u_i \in \ZZ^k, \; u_0 \neq 0\}.
\]
\end{Def}

Parity-check matrices are very useful in the analysis of such distances. For this reason we restrict the study of such distances to noncatastrophic convolutional codes. Note that if $G(d)$ is an encoder of a noncatastrophic convolutional code $\cal C$, then $G(d)$ has a right polynomial inverse and therefore $G(0)$ is full row rank and this means that the $j$-th column distance of $\cal C$ is an invariant of the code and can be obtained as
\[
d^c_j({\cal C})=\min\{\wt([w_0 \; w_1 \; \cdots \; w_j]): \; [w_0 \; w_1 \; \cdots \; w_j] \in \im \, G^c_j, \; w_0 \neq 0\}.
\]
If $H(d)=\sum_{i=0}^{\ell} H_i d^i, \; H_i \in \ZZ^{(n-k) \times n}$ is a parity-check of ${\cal C}$ and $H(\ell)\neq 0$, $\ell\in\mathbb{N}$, then
\[
d^c_j({\cal C})=\min\{\wt([w_0 \; w_1 \; \cdots \; w_j]): \; [w_0 \; w_1 \; \cdots \; w_j] (H^c_j)=0, \; w_0 \neq 0\},
\]
where
\[
H^c_j=\left[
\begin{array}{cccc}
H_0 & H_1& \cdots & H_j\\
& H_0 & \cdots & H_{j-1} \\
 & & \ddots & \vdots \\
& &  & H_0
\end{array}
\right],
\]
with $H_j=0$ for $j > \ell$. The following theorem is immediate and its proof is analogous to the field case.

\begin{Thm}\label{indepTruncDes} Let ${\cal C}$ be a noncatastrophic convolutional code of rate $k/n$. Then, for any $j,d \in \mathbb N$, $d^c_j({\cal C}) = d$ if and only if the following conditions are satisfied:
\begin{enumerate}
\item there exist $d$ rows of $H^c_j$ linearly dependent over $\ZZ[d]$ such that one of these rows belongs to the first $n$ rows of $H^c_j$;
\item all $d-1$ rows, in which one of the rows belongs to the first $n$ rows of $H^c_j$, are linearly independent over $\ZZ[d]$.
\end{enumerate}
\end{Thm}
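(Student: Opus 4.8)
The plan is to translate the weight minimisation that defines $d^c_j({\cal C})$ into a statement about $\ZZ$-linear dependencies among the rows of the truncated sliding parity-check matrix $H^c_j$, exactly as in the field case, and then to split the equality $d^c_j({\cal C})=d$ into the two inequalities $d^c_j({\cal C})\le d$ and $d^c_j({\cal C})\ge d$, which will correspond to conditions 1 and 2 respectively.

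First I would recall, from the kernel representation of noncatastrophic codes and the remarks preceding the statement, that because ${\cal C}$ is noncatastrophic any encoder $G(d)$ has a polynomial right inverse (Theorem~\ref{lzp}), so $G(0)$ has full row rank and ${\cal C}=\ker_{\ZZ[d]}H(d)$ for a parity-check matrix $H(d)$; hence the $j$-th column distance is the code invariant
\[
d^c_j({\cal C})=\min\{\wt([w_0\ \cdots\ w_j]) : [w_0\ \cdots\ w_j]\,H^c_j=0,\ w_0\neq 0\}.
\]
The key observation is then purely linear algebraic and uses that $H^c_j$ has entries in $\ZZ$. Writing a vector $v=[w_0\ \cdots\ w_j]\in\ZZ^{n(j+1)}$ with support $S$, the identity $v\,H^c_j=0$ says precisely that the rows of $H^c_j$ indexed by $S$ satisfy a $\ZZ$-linear relation whose coefficients are the corresponding nonzero entries of $v$; the condition $w_0\neq 0$ holds exactly when $S$ meets the first $n$ rows with a nonzero coefficient there, and $\wt(v)=|S|$. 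Conversely, any relation $\sum_{i\in S}c_i r_i=0$ with all $c_i\neq 0$ and $S$ meeting the first $n$ rows produces such a $v$ of weight $|S|$. I would also note that for these constant rows, linear dependence over $\ZZ[d]$ and over $\ZZ$ coincide: collecting coefficients of the powers of $d$, a nontrivial polynomial relation has at least one nontrivial slice with coefficients in $\ZZ$.

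With this dictionary the two directions are immediate. For $d^c_j({\cal C})\le d$: a truncated codeword of weight $d$ with $w_0\neq 0$ exists if and only if some $d$ rows of $H^c_j$, one of them among the first $n$, are linearly dependent --- this is condition 1. For $d^c_j({\cal C})\ge d$: there is no truncated codeword of weight at most $d-1$ with $w_0\neq 0$ if and only if no $d-1$ rows of $H^c_j$, one of which lies among the first $n$, are linearly dependent, i.e. any such $d-1$ rows are linearly independent --- this is condition 2. Intersecting the two statements gives $d^c_j({\cal C})=d$ if and only if conditions 1 and 2 both hold.

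The routine but slightly delicate step --- the analogue of the care needed in the field case --- is the bookkeeping between the Hamming weight of $[w_0\ \cdots\ w_j]$ and the number of rows occurring in the dependence: one must ensure that the relation realising the minimum is supported on exactly the rows being counted, with all coefficients nonzero, and that the row forcing $w_0\neq 0$ genuinely occurs in it, so that a minimum-weight truncated codeword and a smallest admissible collection of linearly dependent rows really correspond. Once this is made precise nothing further is needed, which is why the statement is labelled immediate.
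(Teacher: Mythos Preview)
The paper does not actually prove this theorem: it states that the result ``is immediate and its proof is analogous to the field case'' and moves on. Your sketch is precisely the standard field-case argument (identify truncated codewords in $\ker H^c_j$ with $\ZZ$-linear relations among rows of $H^c_j$, then read off the two conditions from the two inequalities $d^c_j\le d$ and $d^c_j\ge d$), so in approach you coincide with what the paper invokes and you supply considerably more detail than the paper does; your observation that for the constant rows of $H^c_j$ linear dependence over $\ZZ[d]$ and over $\ZZ$ agree is a genuine clarification the paper leaves implicit.
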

\begin{Rem}\label{rem}
Since the lines of a matrix $A$ are linearly independent if and only if $[A] _p$ is a full row rank, the conditions $1$ and $2$ of the Theorem \ref{indepTruncDes} can be expressed, respectively, and in an equivalent way, as follows:
\begin{enumerate}
\item there exist $d$ rows of $[H^c_j]_p$ linearly dependent over $\Z_p[d]$ such that one of these rows belongs to the first $n$ rows of $[H^c_j]_p$;
\item all $d-1$ rows, in which one of the rows belongs to the first $n$ rows of $[H^c_j]_p$, are linearly independent over $\Z_p[d]$.
\end{enumerate}
\end{Rem}
Next theorem provides upper bounds on the column distances of a noncatastrophic convolutional code. These upper bounds were found in \cite{NaPiTo17} for the more general case in which the convolutional codes are not necessarily noncatastrophic free convolutional codes. Although the result is not new, we opted to present the proof, because it is much simpler than the one in \cite{NaPiTo17}.

\begin{Thm}\label{rosenthal corpo}
Let $\cal C$ be a noncatastrophic convolutional code of rate $k/n$. Then
\[
d^c_j({\cal C}) \leq (n-k)(j+1)+1,
\]
for all $j \in \mathbb N_0$.
\end{Thm}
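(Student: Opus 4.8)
The plan is to reduce the bound, by means of Theorem \ref{indepTruncDes} together with Remark \ref{rem}, to a counting argument on the rows of the projected truncated sliding matrix $[H^c_j]_p$. First I would fix a parity-check matrix $H(d) \in \ZZ[d]^{n \times (n-k)}$ of $\cal C$ (which exists precisely because $\cal C$ is noncatastrophic), write $H(d)=\sum_{i=0}^{\ell} H_i d^i$ and form $H^c_j$ as in the text. Since $H^c_j$ consists of $j+1$ block-rows of height $n$ and $j+1$ block-columns of width $n-k$, the matrix $[H^c_j]_p$ over $\mathbb{Z}_p[d]$ has $(j+1)n$ rows and $(j+1)(n-k)$ columns; working over the field of fractions $\mathbb{Z}_p(d)$, its rank is therefore at most $(j+1)(n-k)$. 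By Theorem \ref{indepTruncDes} and Remark \ref{rem}, the number $d^c_j({\cal C})$ equals the least cardinality of a set of rows of $[H^c_j]_p$ that is linearly dependent over $\mathbb{Z}_p[d]$ and contains at least one of the first $n$ rows; so it suffices to exhibit one such dependent set of cardinality at most $(n-k)(j+1)+1$.

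To produce it, let $r$ be the first row of $[H^c_j]_p$, which lies among the first $n$ rows. If $r=0$, then $\{r\}$ is already linearly dependent and has size $1\le (n-k)(j+1)+1$ (this also covers the degenerate case $k=n$, where $[H^c_j]_p$ has no columns). Otherwise $\{r\}$ is linearly independent, and I would greedily adjoin rows of $[H^c_j]_p$, keeping the set linearly independent over $\mathbb{Z}_p(d)$, until no further row can be added; call the resulting set $S$, so $r\in S$ and $S$ is a maximal linearly independent set of rows. Being a maximal independent subset of the rows, $S$ is a basis of the row space of $[H^c_j]_p$, so $|S|$ equals the rank of $[H^c_j]_p$, hence $|S|\le (j+1)(n-k)$. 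Since $k\ge 1$, we have $(j+1)n>(j+1)(n-k)\ge |S|$, so there is a row $r'$ of $[H^c_j]_p$ not belonging to $S$; by maximality, $S\cup\{r'\}$ is linearly dependent over $\mathbb{Z}_p[d]$. This set has cardinality $|S|+1\le (n-k)(j+1)+1$ and contains $r$, which is among the first $n$ rows, so $d^c_j({\cal C})\le (n-k)(j+1)+1$, as required.

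There is no serious obstacle here; the proof is short, which is exactly the point. The only two places needing care are the passage through Theorem \ref{indepTruncDes} and Remark \ref{rem} — so that linear (in)dependence of rows of $H^c_j$ over $\ZZ[d]$ may be tested after projection to $\mathbb{Z}_p$, turning the problem into linear algebra over the field $\mathbb{Z}_p(d)$ — and the elementary but essential observation that the number of rows of $[H^c_j]_p$, namely $(j+1)n$, strictly exceeds the rank bound $(j+1)(n-k)$, which is what guarantees the extra row $r'$ and where the hypothesis $k\ge 1$ (equivalently, ${\cal C}\ne 0$) enters. The noncatastrophicity of $\cal C$ is used once, at the very start, to guarantee that a polynomial parity-check matrix $H(d)$ exists.
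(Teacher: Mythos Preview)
Your argument is correct. It differs from the paper's proof, which instead reduces to the field case: since $\cal C$ is noncatastrophic, $[{\cal C}]_p$ is a noncatastrophic code over $\mathbb Z_p$, so the known bound $d^c_j([{\cal C}]_p)\le (n-k)(j+1)+1$ from \cite{gl03} applies; then a codeword $w(d)\in [{\cal C}]_p$ with $w_0\ne 0$ of small truncated weight is lifted to $p^{r-1}w(d)\in{\cal C}$, which has the same truncated weight. Your route bypasses the citation and instead uses Theorem~\ref{indepTruncDes} together with Remark~\ref{rem} to turn the problem into a rank-versus-row-count argument on $[H^c_j]_p$; this is essentially a self-contained proof of the field bound (the same rank argument would establish the cited inequality over $\mathbb Z_p$), so what you gain is independence from \cite{gl03} at the cost of invoking the parity-check characterisation. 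One cosmetic point: $[H^c_j]_p$ has constant entries in $\mathbb Z_p$, so there is no need to pass to $\mathbb Z_p(d)$; working over $\mathbb Z_p$ suffices and matches the phrasing of Remark~\ref{rem}.
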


\begin{proof}
Since \codc\hs is a noncatastrophic convolutional code over $\ZZ$, then$[{\cal C}]_p$ is a noncatastrophic convolutional code code over $\mathbb Z_p$ of rate $k/n$, and therefore
\[
d^c_j([{\cal C}]_p) \leq (n-k)(j+1)+1,
\]
for all $j \in \mathbb N_0$ (see \cite{gl03}). Let $w(d)=\sum_{i \in \mathbb N_0} w_i d^i \in [{\cal C}]_p$ with $w_0 \neq 0$, then
$$
\wt([w_0 \; w_1 \; \cdots \; w_j]) \leq (n-k)(j+1)+1.
$$
Let us consider $[w_0 \; w_1 \; \cdots \; w_j]$ as a vector of $\Z_{p^r}^{n(j+1)}$. Then, $\tilde w(d)=p^{r-1}w(d) \in p^{r-1} {\cal C}$ is such that $ \tilde w(0) \neq 0$ and $\wt([\tilde w_0 \; \tilde w_1 \; \cdots \; \tilde w_j])=\wt([w_0 \; w_1 \; \cdots \; w_j])$. Then $d^c_j({\cal C}) \leq (n-k)(j+1)+1$.
\end{proof}

The next result readily follows from \cite[Theorem 2.4]{gl03} and Remark \ref{rem}.


\begin{Thm}
Let $G(d)$ be an encoder of a noncatastrophic convolutional code over $\ZZ[d]$, \codc, of rate $k/n$, $k\leq n$, and $H(d)$ be a parity-check matrix of \codc.
The following are equivalent:
\begin{enumerate}
\item $d^c_j({\cal C}) = (n-k)(j+1)+1$.
\item every $(j+1)k\times (j+1)k$ full-size minor of $[G_j^c]_p$ formed from the columns with indices \mbox{$1\leq t_1 < \ldots <t_{(j+1)k}$}, where $t_{sk+1}> sn$, $s=1,\, \ldots\, , j$,is nonzero.
\item every $(j+1)(n-k)\times (j+1)(n-k)$ full-size minor of $[H_j^c]_p$ formed from the columns with indices $1\leq r_1 < \ldots <r_{(j+1)(n-k)}$, where $r_{s(n-k)}\leq sn$, $s=1, \ldots\, , j$, is nonzero.

\end{enumerate}
\end{Thm}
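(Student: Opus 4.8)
The plan is to reduce everything to the field case over $\Z_p$ via the projection map, exactly as was done for the column-distance upper bound in Theorem \ref{rosenthal corpo}. The key observation is Remark \ref{rem}: linear independence of rows of $G^c_j$ or $H^c_j$ over $\ZZ[d]$ is equivalent to full row rank of the projected matrices $[G^c_j]_p$ and $[H^c_j]_p$ over $\Z_p[d]$. Moreover, $[G^c_j]_p = [G_j^c]_p^{\,([\cdot]_p)}$ in the obvious sense, i.e. the truncated sliding matrix of the projected encoder $[G(d)]_p$, and similarly for $H$. Since $[{\cal C}]_p$ is a convolutional code over the field $\Z_p$ with encoder $[G(d)]_p$ and parity-check matrix $[H(d)]_p$, the classical characterization \cite[Theorem 2.4]{gl03} applies verbatim to $[{\cal C}]_p$.

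The steps, in order, would be as follows. First, recall from Theorem \ref{rosenthal corpo} (and its proof) that $d^c_j({\cal C}) = d^c_j([{\cal C}]_p)$ for a noncatastrophic code; indeed the proof there shows $d^c_j({\cal C}) \leq d^c_j([{\cal C}]_p)$ by lifting via $p^{r-1}$, and the reverse inequality $d^c_j({\cal C}) \geq d^c_j([{\cal C}]_p)$ follows because any codeword $w(d)$ of ${\cal C}$ with $w_0\neq 0$ projects (after suitable multiplication by a power of $p$, using Lemma \ref{ord}) to a nonzero codeword of $[{\cal C}]_p$ of no greater weight, exactly as in the argument preceding Theorem \ref{thmequaldfree}. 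Hence statement (1), $d^c_j({\cal C}) = (n-k)(j+1)+1$, is equivalent to $d^c_j([{\cal C}]_p) = (n-k)(j+1)+1$. Second, apply \cite[Theorem 2.4]{gl03} to the field-convolutional code $[{\cal C}]_p$ with encoder $[G(d)]_p$ and parity-check $[H(d)]_p$: this gives the equivalence of $d^c_j([{\cal C}]_p) = (n-k)(j+1)+1$ with the minor conditions (2) and (3) stated for $[G_j^c]_p$ and $[H_j^c]_p$. Third, observe that the truncated sliding matrix of $[G(d)]_p$ is precisely $[G_j^c]_p$ and likewise $[H_j^c]_p$ is the truncated sliding matrix of $[H(d)]_p$, because componentwise reduction mod $p$ commutes with the block-Toeplitz construction. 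Chaining these three equivalences yields the theorem.

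One subtlety to address carefully is that Theorem \ref{indepTruncDes} and Remark \ref{rem} are phrased in terms of a parity-check matrix $H(d)$, which exists precisely because ${\cal C}$ is noncatastrophic; one should note that $[H(d)]_p$ is then a parity-check matrix of $[{\cal C}]_p$ over $\Z_p[d]$ (it has the right size $n\times(n-k)$, full column rank mod $p$ since $H(d)$ does by Theorem \ref{lzp}, and annihilates $[{\cal C}]_p$), so that the field result genuinely applies. Similarly one must check that $[G(d)]_p$ is a genuine encoder of $[{\cal C}]_p$, which is immediate since the rows of $[G(d)]_p$ generate $[{\cal C}]_p$ over $\Z_p[d]$ and are linearly independent by Remark \ref{rem} applied to $j=0$ (or directly, since $G(d)$ is $\ell ZP$).

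The main obstacle, such as it is, is bookkeeping rather than mathematics: one must make sure that the index conditions ``$t_{sk+1} > sn$'' in (2) and ``$r_{s(n-k)} \le sn$'' in (3) are transcribed correctly from \cite[Theorem 2.4]{gl03}, and that the sizes $(j+1)k$ and $(j+1)(n-k)$ of the relevant minors of $[G_j^c]_p$ and $[H_j^c]_p$ match up with the block structure. Since all the analytic content is already packaged in the cited field-case theorem and in the projection machinery developed earlier in this section, the proof is genuinely short: it is essentially the single sentence ``apply \cite[Theorem 2.4]{gl03} to $[{\cal C}]_p$ and use $d^c_j({\cal C}) = d^c_j([{\cal C}]_p)$ together with Remark \ref{rem}.''
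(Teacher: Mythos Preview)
Your proposal is correct and follows the same route as the paper, which simply states that the result ``readily follows from \cite[Theorem~2.4]{gl03} and Remark~\ref{rem}.'' Both you and the paper reduce to the field case by passing to $[{\cal C}]_p$ and then invoke the Gluesing-Luerssen--Rosenthal characterization.

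One small caveat: your direct weight argument for the inequality $d^c_j({\cal C}) \geq d^c_j([{\cal C}]_p)$ does not transplant verbatim from the free-distance argument preceding Theorem~\ref{thmequaldfree}. The difficulty is that multiplying a codeword $w(d)$ with $w_0\neq 0$ by $p^{\ell-1}$, where $\ell=\mathrm{ord}(w(d))$, may kill $w_0$ if $\mathrm{ord}(w_0)<\ell$; conversely, using $\ell=\mathrm{ord}(w_0)$ need not land you in $p^{r-1}{\cal C}\simeq[{\cal C}]_p$. (This is precisely the subtlety that forces the iterative argument in the later MDP theorem.) However, this does not damage your proof: the equality $d^c_j({\cal C}) = d^c_j([{\cal C}]_p)$ is delivered cleanly by Theorem~\ref{indepTruncDes} together with Remark~\ref{rem}, which you already cite, so the detour through the weight argument is unnecessary.
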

The column distances of a noncatastrophic \codc\hs do not grow indefinitely,
since they are naturally upper bounded by the free distance of \codc. If \codc\hs is a noncatastrophic convolutional code over a finite field of rate $k/n$ and degree $\delta$, then \codc\hs can have maximum column distances up to the $L$-\textit{th} column distance, where $L=\left\lfloor \dfrac{\delta}{k}\right\rfloor + \left\lfloor \dfrac{\delta}{n-k}\right\rfloor$. $L$ is the largest integer for which
$$
(n-k)(L+1)+1\leq (n-k)\left( \left\lfloor \dfrac{\delta}{k}\right\rfloor +1\right)+\delta+1.
$$

\begin{Def}\cite{gl03}
Let $\codc$ be a noncatastrophic code over a finite field $\mathbb{F}$, of rate $k/n$ and degree $\delta$.
Let \mbox{$L=\left\lfloor \dfrac{\delta}{k}\right\rfloor + \left\lfloor \dfrac{\delta}{n-k}\right\rfloor$}.
\codc\hs is a Maximum Distance Profile (MDP) convolutional code if
$$
d^c_j({\cal C})= (n-k)(j+1)+1,\;\; \mbox{for all } j\leq L.
$$
\end{Def}

Since the upper bounds of the column distances and the generalized Singleton bound coincide with the counterpart notions of $[\codc]_p$, if \codc\hs is a noncatastrophic convolutional code over $\ZZ[d]$ of rate $k/n $ and \bg\hs $\delta$ then \codc\hs also can achieve the upper bound for column distance only up to the instant $L$-\textit{th}. This leads to the following definition.
\begin{Def}
Let $\codc$ be a noncatastrophic code over $\ZZ[d]$, of rate $k/n$ and $b$-degree $\delta$.
Let \mbox{$L=\left\lfloor \dfrac{\delta}{k}\right\rfloor + \left\lfloor \dfrac{\delta}{n-k}\right\rfloor$}.
\codc\hs is a Maximum Distance Profile (MDP) convolutional code if
$$
d^c_j({\cal C})= (n-k)(j+1)+1,\;\; \mbox{for all } j\leq L.
$$
\end{Def}

\begin{Thm}
Let $\codc$ be a noncatastrophic code over $\ZZ[d]$, of rate $k/n$ and $b$-degree $\delta$.
\codc\hs is an MDP code if and only if $[\codc]_p$ is an MDP code over $\Z_p[d]$.
\end{Thm}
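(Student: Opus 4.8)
The plan is to reduce everything to two observations. First, the horizon $L=\left\lfloor\frac{\delta}{k}\right\rfloor+\left\lfloor\frac{\delta}{n-k}\right\rfloor$ occurring in the two MDP definitions is literally the same integer, since by Definition \ref{bdegree} the $b$-degree $\delta$ of ${\cal C}$ \emph{is} the degree of $[{\cal C}]_p$. Second, for every index $j$ one has
\[
d^c_j({\cal C})=(n-k)(j+1)+1 \quad\Longleftrightarrow\quad d^c_j([{\cal C}]_p)=(n-k)(j+1)+1 .
\]
Granting these, if ${\cal C}$ is MDP then $d^c_j({\cal C})=(n-k)(j+1)+1$ for all $j\le L$, hence the same holds for $[{\cal C}]_p$ and $[{\cal C}]_p$ is MDP; the converse is symmetric.

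To prove the displayed equivalence, fix an encoder $G(d)=\sum_iG_id^i$ of ${\cal C}$. Because ${\cal C}$ is noncatastrophic, $G(d)$ is $\ell ZP$, so by Lemma \ref{lzplp} the reduction $[G(d)]_p$ is left prime over $\Z_p[d]$; and since the reduction map $\Z_{p^r}[d]^k\to\Z_p[d]^k$ is surjective one has $[{\cal C}]_p=\im_{\Z_p[d]}[G(d)]_p$, so $[G(d)]_p$ is a noncatastrophic encoder of the rate-$k/n$ code $[{\cal C}]_p$. Since reduction modulo $p$ acts entrywise it commutes with the construction of the truncated sliding matrix, i.e. $[G^c_j]_p=([G(d)]_p)^c_j$. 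Now apply the theorem preceding the present one, which — via Remark \ref{rem} and the field result \cite[Theorem~2.4]{gl03} — states that $d^c_j({\cal C})=(n-k)(j+1)+1$ if and only if a prescribed family of full-size minors of $[G^c_j]_p$ is nonzero; the very same criterion, applied now to the code $[{\cal C}]_p$ over the field $\Z_p$ with encoder $[G(d)]_p$ and sliding matrix $([G(d)]_p)^c_j=[G^c_j]_p$, characterizes $d^c_j([{\cal C}]_p)=(n-k)(j+1)+1$. As the two minor conditions are word for word the same, the equivalence follows. If one prefers to argue with a parity-check matrix $H(d)$ of ${\cal C}$, the same reasoning applies using $[H^c_j]_p=([H(d)]_p)^c_j$ together with Theorem \ref{indepTruncDes} and Remark \ref{rem}; in fact that route yields the stronger conclusion $d^c_j({\cal C})=d^c_j([{\cal C}]_p)$ for every $j$.

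The main thing to be careful about, should one take the parity-check route, is that $[H(d)]_p$ must be checked to be an honest parity-check matrix of $[{\cal C}]_p$ — i.e. of full column rank over $\Z_p[d]$ with $\ker_{\Z_p[d]}[H(d)]_p$ equal to $[{\cal C}]_p$ and not larger. This is cleanest if one chooses $H(d)$ as in the proof of the kernel-representation theorem, so that $H(d)$ forms the last $n-k$ columns of a matrix unimodular over $\Z_{p^r}[d]$; by Theorem \ref{unimodular} its reduction is unimodular over $\Z_p[d]$, hence $[H(d)]_p$ is right zero-prime, and a rank count (using left primeness of $[G(d)]_p$) then forces $\ker_{\Z_p[d]}[H(d)]_p=[{\cal C}]_p$. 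Since the MDP property of a noncatastrophic code does not depend on the chosen parity-check matrix, this is harmless. Everything else is routine bookkeeping about truncated sliding matrices, and the generator-matrix route sketched above sidesteps it entirely.
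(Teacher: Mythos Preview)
Your proof is correct and takes a genuinely different route from the paper. The paper argues directly at the level of codewords: for the implication $\mathcal C$ MDP $\Rightarrow [\mathcal C]_p$ MDP it lifts a truncated codeword of $[\mathcal C]_p$ and multiplies by $p^{r-1}$ (essentially your observation), but for the converse it runs an iterative ``order'' argument: given $w(d)\in\mathcal C$ with $w_0\neq 0$, it finds the order $\ell$ of the truncation, multiplies by $p^{r-\ell}$ to land in $p^{r-1}\mathcal C$, peels off the resulting nonzero tail, projects to $[\mathcal C]_p$, applies the MDP bound there, and then repeats on the remaining initial block, summing the inequalities. You instead invoke the minor characterisation stated immediately before the MDP definitions (itself a corollary of \cite[Theorem~2.4]{gl03} and Remark~\ref{rem}): since that criterion for $d^c_j(\mathcal C)=(n-k)(j+1)+1$ is already phrased in terms of $[G^c_j]_p=([G(d)]_p)^c_j$, it coincides verbatim with the field criterion for $d^c_j([\mathcal C]_p)$, and the equivalence drops out for every $j$. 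Your approach is shorter and, via the parity-check variant with Theorem~\ref{indepTruncDes} and Remark~\ref{rem}, actually yields the stronger statement $d^c_j(\mathcal C)=d^c_j([\mathcal C]_p)$ for all $j$; the paper's argument is more self-contained in that it does not lean on the minor theorem, and it illustrates concretely how weights over $\mathbb Z_{p^r}$ are controlled by the projection.
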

\begin{proof}
Let \codc\hs be an MDP convolutional code, \mbox{$j\leq L=\left\lfloor \dfrac{\delta}{k}\right\rfloor + \left\lfloor \dfrac{\delta}{n-k}\right\rfloor$} and
$
\;\tilde{w}(d)=\displaystyle\sum_{i\in\mathbb{N}_0} \tilde{w}_id^i\in [\codc]_p,\;\tilde{w}_0 \neq 0$

Let us consider $w(d)$ as a vector of $\Z_{p^r}[n]^n$, and let $w(d)=\displaystyle\sum_{i\in\mathbb{N}_0} w_id^i; p^{r-1}\tilde{w}(d)\in\codc,\;w_0=p^{r-1}\tilde{w}_0 \neq 0$.
Since $wt([\;\tilde{w}_0\;\tilde{w}_1\;\ldots\;\tilde{w}_j\;])=wt([\;w_0\;w_i\;\ldots\;w_j\;])$ and \codc\hs is an MDP convolutional code and $d_j^c(\codc)\leq wt([\;w_0\;w_i\;\ldots\;w_j\;])$, it follows that
$$
(n-k)(j+1)+1\leq wt([\;\tilde{w}_0\;\tilde{w}_1\;\ldots\;\tilde{w}_j\;]),\; j\leq L
$$ and therefore $[\codc]_p$ is a MDP code.

Let us now consider $[\codc]_p$ a MDP code and $w(d)=\displaystyle\sum_{i\in\mathbb{N}_0} w_id^i\in \codc$, $w_0\neq 0$, $j\leq L=\left\lfloor \dfrac{\delta}{k}\right\rfloor + \left\lfloor \dfrac{\delta}{n-k}\right\rfloor$, $\ell$ the order of  $[\;w_0\;w_1\;\ldots\;w_j\;]$ and $s_1$ the smallest integer less than or equal to $j$, such that $w_{s_1}$ has order $\ell_1$. It follows that $p^{r-l}[\;w_0\;w_1\;\ldots\;w_j\;] = p^{r-1}[\;0\;0\;\tilde{w}_{s_1}\;\ldots\;\tilde{w}_j\;],\; \mbox{with}\; [\tilde{w}_{s_1}]_p \neq 0$.
Let $p^{r-l}w(d)=d^s( p^{r-1}\tilde{w}(d))$, \mbox{$\tilde{w}(d)=\displaystyle\sum_{i\in\mathbb{N}_0} \tilde{w}_{s_1+i}\,d^i$}, be a codeword. Since $G(d)$ is a \lzp\hs matrix, $G(0)$ is full row rank. So $d^s( p^{r-1}\tilde{w}(d))=(d^s u(d))G(d)$ and, therefore, $p^{r-1}\tilde{w}(d)= u(d)G(d)$, which implies that $p^{r-1}\tilde{w}(d)\in\codc$. That way, $[\tilde{w}(d)]_p\in [\codc]_p$ and
\begin{eqnarray*}
 wt\left([\;p^{r-1}\tilde{w}_{s_1}\;\ldots\;p^{r-1}\tilde{w}_{j-s_1}\;]\right) & = & wt\left([\;\tilde{w}_{s_1}\;\ldots\;\tilde{w}_{j-s_1}\;]_p\right),\quad [\tilde{w}_{s_1}]_p\neq 0.
\end{eqnarray*}
Furthermore, $[\codc]_p$ is MDP, so
$$
wt\left([\;p^{r-1}\tilde{w}_{s_1}\;\ldots\;p^{r-1}\tilde{w}_{j-s_1}+1\;]\right)\geq (n-k)(j-s_1+1)+1.
$$

Let us now consider $[\;w_0\;w_1\;\ldots\;w_{s_1-1}\;]$ with order $\ell_2\leq \ell$, such that
$$
p^{r-l_2}[\;w_0\;w_1\;\ldots\;w_{s_1-1}\;]=p^{r-1}[\;0\;0\;\tilde{\tilde{w}}_{s_2}\;\ldots\;\tilde{\tilde{w}}_{s_1-1}\;],
$$
with $s_2\leq s_1-1$ and $[\tilde{\tilde{w}}_{s_2}(d)]_p\neq 0$. Repeating the previous reasoning, we have to
$$
wt\left([\;p^{r-1}\tilde{w}_{s_2}\;\ldots\;p^{r-1}\tilde{w}_{s_1-1}\;]\right)\geq (n-k)(s_1-1+s_2+1)+1=(n-k)(s_1+s_2)+1
$$
Successively applying the previous process, we obtain
$$
wt\left([\;w_0\;w_1\;\ldots\;w_j\;]\right)\geq (n-k)(j+1)+1.
$$
\end{proof}

According to this theorem, we can easily get an MDP code over $\ZZ[d]$ of rate $k/n$ and $b$-degree $\delta$, from an MDP code over $\Z_p[d]$, of rate $k/n$ and degree $\delta$.

\section{Conclusions and future work}

In this paper we have investigated the central notion of primeness of polynomial matrices over $\ZZ$. We showed that zero left prime encoders define noncatastrophic convolutional codes over $\ZZ$ and allow a representation of the convolutional code by means of a polynomial parity-check matrix. We have studied free and columns distances of these codes and show that these are determined by the projection of the code over $\mathbb Z_p$. A natural and interesting avenue for future investigation is to generalized these results to wider classes of rings such as finite chain rings \cite{bouzara2020lifted}.

\section{Acknowledgments}
The second and third authors were supported by The Center for Research and Development in Mathematics and Applications (CIDMA) through the Portuguese
Foundation for Science and Technology (FCT - Fundação para a Ciência e a Tecnologia), references UIDB/04106/2020 and UIDP/04106/2020. Diego Napp is partially supported by Ministerio de Ciencia e Innovación via the grant with ref. PID2019-108668GB-I00.


%
\bibliographystyle{plain}
\bibliography{biblio_com_tudo4}

\end{document}